\documentclass[aps,pra,superscriptaddress,floatfix,twocolumn]{revtex4-2}

\usepackage{titlesec}
\titlespacing*{\section}{0pt}{1.2ex plus .2ex minus .1ex}{0.6ex plus .1ex}

\usepackage{float} 
\usepackage{graphicx}
\usepackage{amsthm}
\usepackage{thmtools}
\usepackage{mathtools}
\usepackage{thm-restate}
\usepackage{amsmath}
\usepackage{algorithm}
\usepackage{algorithmic}
\usepackage{amssymb}
\usepackage{bm}
\usepackage{xcolor}
\usepackage{placeins}
\usepackage[normalem]{ulem}
\usepackage[colorlinks]{hyperref}
\hypersetup{
	pdfstartview={FitH},
	pdfnewwindow=true,
	colorlinks=true,
	linkcolor=blue,
	citecolor=blue,
	filecolor=blue,
	urlcolor=blue}
\usepackage{tikz}
\usetikzlibrary{calc,backgrounds}
\usepackage{physics}
\usepackage{cancel}
\usepackage{multirow}
\usepackage{pgffor}
\usepackage{url}
\usepackage{hypcap}
\usepackage{dsfont}
\usepackage{soul}
\usepackage{inconsolata}

\usepackage[caption=false]{subfig}

\hypersetup{pdfpagemode=UseNone}

\allowdisplaybreaks

\newcommand{\thm}[1]{\hyperref[thm:#1]{Theorem~\ref*{thm:#1}}}
\newcommand{\defn}[1]{\hyperref[defn:#1]{Definition~\ref*{defn:#1}}}
\newcommand{\lem}[1]{\hyperref[lem:#1]{Lemma~\ref*{lem:#1}}}
\newcommand{\prop}[1]{\hyperref[prop:#1]{Proposition~\ref*{prop:#1}}}
\newcommand{\fig}[1]{\hyperref[fig:#1]{Figure~\ref*{fig:#1}}}
\newcommand{\tab}[1]{\hyperref[tab:#1]{Table~\ref*{tab:#1}}}
\renewcommand{\sec}[1]{\hyperref[sec:#1]{Section~\ref*{sec:#1}}}
\newcommand{\app}[1]{\hyperref[app:#1]{Appendix~\ref*{app:#1}}}
\newcommand{\cor}[1]{\hyperref[cor:#1]{Corollary~\ref*{cor:#1}}}
\newcommand{\obs}[1]{\hyperref[obs:#1]{Observation~\ref*{obs:#1}}}
\newcommand{\nn}{\nonumber \\}
\newcommand{\append}[1]{\hyperref[append:#1]{Appendix~\ref*{append:#1}}}

\usepackage{qcircuit}

\usepackage[capitalise]{cleveref} 

\newcommand{\subfigimg}[3][,]{%
  \setbox1=\hbox{\includegraphics[#1]{#3}}
  \raisebox{\dimexpr\ht1-1\baselineskip}{#2}\hspace*{10pt}
  \usebox1
}

\newtheorem{theorem}{Theorem}

\newtheorem{lemma}[theorem]{Lemma}

\usepackage{qcircuit}

\newcommand{\MQ}{\affiliation{
School of Mathematical and Physical Sciences,
Macquarie University, Sydney, NSW 2109, AU} }

\newcommand{\PKU}{\affiliation{
Beijing International Center for Mathematical Research, Peking University, Beijing, China}}

\newcommand{\CQ}{\affiliation{%
ContinoQuantum, Sydney, NSW 2093, AU}}

\newcommand{\AMZ}{\affiliation{%
AWS Center for Quantum Computing, Pasadena, CA}}

\begin{document}

\title{Constant Factor Analysis of Optimal Quantum Linear Solvers in Practice}

\author{Pedro C. S. Costa}\email[Corresponding author: ]{pcs.costa@protonmail.com}\CQ\MQ
\author{Alexander M. Dalzell}\AMZ
\author{Dong An}\PKU
\author{Dominic W. Berry} \MQ

\begin{abstract}
Optimal quantum linear equation solvers provide complexity $O(\kappa\log(1/\epsilon))$, where $\kappa$ is the condition number and $\epsilon$ is the allowable error. The optimal solver using a discrete adiabatic approach [PRX Quantum \textbf{3}, 040303 (2022)] has large analytically proven constant factors for the upper bound on the complexity. The constant factors were later found to be about 1,200 times smaller in numerical testing [Quantum \textbf{9}, 1887 (2025)]. This meant it is about an order of magnitude more efficient than using a randomised approach from [PRX Quantum \textbf{6}, 040373 (2025)], which has far smaller analytically proven constant factors. Recently, a ``Shortcut'' method has been found to provide an optimal solver which also has small proven constant factors. In the present work, we conduct a comprehensive numerical analysis comparing this method with the adiabatic solver for two families of random linear systems. We find that, in the case where the solution norm is \emph{unknown}, the adiabatic solver provides slightly better performance. If the solution norm is \emph{known}, then the shortcut method provides significantly better performance for non-Hermitian matrices.
\end{abstract}
\maketitle

\section{Introduction}

Just as linear system algorithms are foundational in classical computing—powering numerical methods for differential equations and serving as core subroutines in machine learning—quantum linear system algorithms  (QLSAs) \cite{morales2411quantum} are poised to play an analogous role in quantum computing, as primitives for quantum algorithms for differential equations \cite{Berry2024quantumalgorithm,costa2025further,liu2023efficient} and quantum machine  learning \cite{zhao2026exponential}, with polylogarithmic dependence on the system dimension (i.e., exponential improvement over classical scaling) under standard assumptions such as sparsity, moderate condition number, $\kappa$, and efficient oracle access.

The first QLSA was proposed by Harrow, Hassidim, and Lloyd \cite{Harrow_2009}, which has complexity $\mathcal{O}(\kappa^2/\epsilon)$ for allowable error $\epsilon$.
That complexity is suboptimal, because it scales as the square of the condition number, whereas the lower bound found in Ref.~\cite{Harrow_2009} was $\Omega(\kappa)$.
Variable-time amplitude amplification was proposed as a way to obtain near-linear scaling in $\kappa$, albeit at the cost of worse scaling in $\epsilon$ \cite{ambainis,ambainis2010variable}.
The complexity was further improved to being polynomial in $\log(1/\epsilon)$ using a linear combinations of unitaries approach \cite{CKS}.

A new principle in QLSAs was the development of an adiabatic solver, with the first proposal 
giving complexity $\mathcal{O}(\kappa\log(\kappa)/\epsilon)$ \cite{PhysRevLett.122.060504}.
The complexity was further improved to polynomial in $\log(1/\epsilon)$ using improved scheduling \cite{an2019quantum},
then to $\mathcal{O}(\kappa\log(\kappa/\epsilon))$ using filtering \cite{lin2020optimal}.
The optimal complexity is $\mathcal{O}(\kappa\log(1/\epsilon))$, and was achieved using a quantum walk with a discrete adiabatic theorem \cite{costa2022optimal}.
An alternative approach to achieving optimal complexity is an adiabatic solver using randomised times, though
initial methods were slightly suboptimal \cite{PhysRevLett.122.060504,jennings2023arxiv}.
This was improved to being strictly optimal scaling using ``Poissonization'' \cite{Cunningham2024}, which was incorporated into Ref.~\cite{jennings2023efficient} (the published version of Ref.~\cite{jennings2023arxiv}).
A new approach to achieving the optimal complexity is the ``Shortcut'' method proposed in Ref.~\cite{dalzell2024shortcut}.

After the development of QLSAs with the optimal $\mathcal{O}(\kappa \log(1/\epsilon))$ complexity, the focus shifted to the constant factors in this complexity. The QLSA constant factors are important when evaluating the feasibility of proposed end-to-end applications via detailed resource estimates, as has been done for portfolio optimization \cite{dalzell2022socp} and solving certain differential equations \cite{jennings2023costDiffEQ,penuel2025detailedassessmentcalculatingdrag,jennings2025endtoendNonlinearDiffEQ}. 
The discrete adiabatic quantum walk (QW) approach \cite{costa2022optimal} proved an upper bound on the complexity with a large constant factor, leaving open the possibility for significant improvements.
The ``randomised'' method for the quantum linear system problem (QLSP) was initially suboptimal, but much tighter bounds on the complexity were proven, leading to the conjecture that it would outperform the QW method for realistic parameters \cite{jennings2023efficient}.
To test this conjecture, Ref.~\cite{costa2023discrete} numerically tested both quantum solvers with thousands of random instances of matrices with a range of dimensions and condition numbers.
The results from Ref.~\cite{costa2023discrete} show that in practice, the constant factor of the QW method is about $1200$ times smaller than the upper bound estimated in Ref.~\cite{costa2023discrete}.
As a result, the QW outperforms the randomised method by roughly an order of magnitude.

More recently, the Shortcut method \cite{dalzell2024shortcut} provides an optimal quantum linear system solver.
This method is based on the quantum singular value transformation (QSVT) \cite{GilyenSuLowEtAl2019}, and offers a significantly smaller constant factor guarantee compared to the upper bound given for the QW \cite{costa2022optimal}.
In this work, we perform comparative numerical testing of the QW and Shortcut methods.
In comparison to the testing in Ref.~\cite{costa2023discrete}, we test larger condition number and dimension, and different sparsity conditions. We observe a slight different constant factor for the new range of condition number and dimension in the algorithms' performance.      We therefore test the randomised method with higher dimension, to ensure that the conclusions of Ref.~\cite{costa2023discrete} still hold.

The background of the Shortcut method is presented in \cref{sec:BackG}. 
Details of the numerical tests for the QLS solvers are given in \cref{Sec:result}. 
We conclude in \cref{sec:conc}. 
Additional technical details on the Shortcut method and further numerical analyses are provided in the appendices.

\section{Review and Algorithmic Method for the Shortcut to an Optimal QLSP Solver}
\label{sec:BackG}

\subsection{The QLSP} 

We begin with a brief review of the Quantum Linear System Problem (QLSP), based primarily on Ref.~\cite{dalzell2024shortcut}. In this setting, the input is a square matrix \( A \in \mathbb{C}^{m \times m} \), and the goal is to solve the linear system \( A\mathbf{x} = \mathbf{b} \), where \( \mathbf{x}, \mathbf{b} \in \mathbb{C}^m \).

In the quantum version, we assume access to a unitary \( U_{\mathbf{b}} \) acting on \( s \approx \log_2(m) \) qubits such that
\begin{equation}
U_{\mathbf{b}}\ket{e_0} = \ket{\mathbf{b}} = \|\mathbf{b}\|^{-1} \sum_{j=0}^{2^s -1} b_j \ket{e_j},
\end{equation}
i.e., the normalized vector \( \mathbf{b} \) is encoded as the amplitude of a quantum state in the computational basis \( \{\ket{e_j}\}_{j=0}^{2^s-1} \).

We also assume \( \|A\| = 1 \). Access to \( A \) is provided via a block encoding: an \((\alpha, a)\)-block-encoding unitary \( U_A \) acting on \( a + s \) qubits, such that
\begin{equation}
A = \alpha \left( \bra{0}^{\otimes a} \otimes I_{2^s} \right) U_A \left( \ket{0}^{\otimes a} \otimes I_{2^s} \right),
\end{equation}
with \( \alpha \geq \|A\| \).

Given an error parameter \( \epsilon \), the goal of the QLSP is to produce a quantum state \( \ket{\tilde{\mathbf{x}}} \) such that \( \|\ket{\mathbf{x}} - \ket{\tilde{\mathbf{x}}}\| \leq \epsilon \), where \( \ket{\mathbf{x}} \propto A^{-1}\ket{\mathbf{b}} \).
For non-deterministic procedures this error condition is generalised to requiring an output quantum state $\rho_\mathbf{x}$ such that the Bures distance is no greater than $\epsilon$.

Since the algorithm proposed in \cite{dalzell2024shortcut} requires an estimate of \( \|\mathbf{x}\| \), it is useful to know its maximum possible value. Given \( \kappa = \|A\| \cdot \|A^{-1}\| \) and \( \|\mathbf{b}\| = 1 \), we have
\begin{equation}
\|\mathbf{x}\| = \|A^{-1}\mathbf{b}\| \leq \|A^{-1}\| \cdot \|\mathbf{b}\| = \kappa.
\end{equation}

\subsection{The Algorithm}

The algorithm from \cite{dalzell2024shortcut} consists of two key components: a Kernel Reflection (KR) followed by a Kernel Projection (KP), detailed in \cref{sec:kernels}. In this work, we focus solely on the KR part, whose cost we analyze to extract a constant factor for comparison with the results in \cite{costa2023discrete}.

To apply the KR method, the matrix \( A \) must first be enlarged using a guess for the norm of the solution. We restrict ourselves to square matrices \( A \in \mathbb{C}^{n \times n} \), which are embedded into larger square matrices \( A_t \in \mathbb{C}^{m \times m} \), where \( m = n+1 \), based on a guess \( t \in [1,\kappa] \) for \( \|\mathbf{x}\| \). Let \( \{v_j\}_{j=0}^{n-1} \) be the canonical basis of \( \mathbb{C}^n \), where $A$ is represented as $A = \sum_{i,j=0}^{n-1} A_{ij} v_i v_j^\dagger$. Then, we define
\begin{align}
A_t &\coloneqq \sum_{i,j=0}^{n-1} A_{ij} e_i e_j^\dagger + \frac{1}{t} e_{m-1} e_{m-1}^\dagger\nonumber\\
&=
\begin{pmatrix}
A & 0 \\
0 & \tfrac{1}{t}
\end{pmatrix},
\end{align}
where \( \{e_j\}_{j=0}^{m-1} \), with $m=n+1$, is the canonical basis of \( \mathbb{C}^m \).

The vector \( \mathbf{b} = \sum_{j=0}^{n-1} b_j v_j \) is similarly extended to
\begin{equation}
\mathbf{b}' \coloneqq \sum_{j=0}^{n-1} \frac{b_j}{\sqrt{2}} e_j + \frac{1}{\sqrt{2}} e_{m-1}.
\end{equation}
The corresponding solution in the enlarged space satisfies \( A_t \mathbf{x}_t = \mathbf{b}' \), where
\begin{equation}
\label{eq:xt}
\mathbf{x}_t = \sum_{j=0}^{n-1} \frac{x_j}{\sqrt{2}} e_j + \frac{t}{\sqrt{2}} e_{m-1}.
\end{equation}

All block encodings and QSVT constructions are performed with these extended quantities. The block-encoding \( U_{A_t} \) and state-preparation unitary \( U_{\mathbf{b}'} \) can be built using \( U_A \) and \( U_{\mathbf{b}} \), as prescribed in \cite{dalzell2024shortcut}. In practice, the circuit implementing the block-encoding for $A_t$ has only one $t$-dependent gate---a single-qubit rotation where the angle of rotation is determined by $t$.

From these, a block encoding of
\begin{equation}
\label{eq:G_t}
G_t = Q_{\mathbf{b}'} A_t
\end{equation}
is constructed, where $Q_{\mathbf{b}'}= I - \mathbf{b}'\mathbf{b}'^{\dagger}$ is the projector onto the complement of $\mathbf{b}'$, which ensures that \( \mathbf{x}_t \in \ker(G_t) \); see \cref{sec:kernels}.  In fact, the kernel of $G_t$ is the span of $\mathbf{x}_t$ and the kernel of $A_t$. The QSVT is then applied to \( G_t \) to implement the kernel reflection. 

The idea of the KR method is to start from the state $\ket{e_{m-1}}$ in the $m$-dimensional Hilbert space and then rotate to the orthogonal vector $\ket{\mathbf{x}}$ where the solution to the augmented linear system satisfies
\begin{equation}\label{eq:x_t_in_terms_of_x_and_e_m-1}
    \ket{\mathbf{x}_t} = \sin(\theta_t)\ket{\mathbf{x}} + \cos{(\theta_t)}\ket{e_{m-1}},
\end{equation}
with
\begin{equation}
\theta_t = \arctan{\left(\frac{\|\mathbf{x}\|}{t}\right)}.
\end{equation}
We summarise the quantum algorithm steps from its quantum circuit in \cref{fig:Algo}. We analyse it step by step.
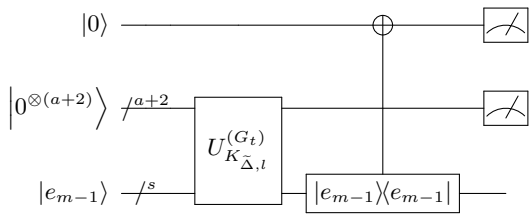
\begin{figure}[H]
\centerline{
\Qcircuit @R=2em @C=1em {
&&&&&\lstick{\ket{0}} & \qw  &\qw &\qw&\targ{}&\meter\\
&&&&&\lstick{\ket{0^{\otimes (a+2)}}}  & {/}^{a+2} \qw& \qw &\multigate{1}{U^{(G_t)}_{K_{\widetilde{\Delta},l}}}&\qw &\meter\\
&&&&&\lstick{\ket{e_{m-1}}} & {/}^s \qw& \qw &\ghost{U^{(G_t)}_{K_{\widetilde{\Delta},l}}}& \gate{\ketbra{e_{m-1}}}\qwx[-2]&\qw  
}}
\caption{\label{fig:Algo} Quantum circuit of the KR algorithm (reproduction of Figure 10 of \cite{dalzell2024shortcut}). We have $U^{(G_t)}_{K_{\widetilde{\Delta},l}}$ the QSVT that implements the kernel reflection polynomial $K_{\widetilde{\Delta},l}$ on the singular values of $G_t$. We have $a$ ancillas to build the block encoding of $U_{A_t}$, to which two more ancillas are added: one resulting from the QSVT and another from the extra ancilla used for $U_{G_t}$. }
\end{figure}

First, we initialize the algorithm by preparing the state $\ket{e_{m-1}}$ (final register of \cref{fig:Algo}), which admits the decomposition
\begin{equation}
\ket{e_{m-1}} = \cos(\theta_t)\ket{\mathbf{x}_t} + \sin(\theta_t)\ket{\mathbf{y}_t}.
\end{equation}
where
\begin{equation}
\ket{\mathbf{y}_t} = -\cos(\theta_t)\ket{\mathbf{x}} + \sin(\theta_t)\ket{e_{m-1}}.
\end{equation}
is orthogonal to $\ket{\mathbf{x}_t}$.

Next, we implement KR. This step is approximate, and its error is governed by a tunable input parameter $\eta_{\rm KR} \in (0,1]$. In the quantum circuit of \cref{fig:Algo}, it is depicted by the gate $U^{(G_t)}_{\widetilde{\Delta},l}$---a QSVT sequence consisting of several calls to the block-encoding of $G_t$ defined in \cref{eq:G_t}. The effect is the transformation by the order-$2l$ polynomial $K_{\widetilde{\Delta}, l}$ on the singular values of $G_t$, applied to the initial state $\ket{e_m}$. The polynomial and the parameters $\widetilde{\Delta}$ and $l$ are determined based on $\eta_{\rm KR}$ and $\kappa$ in order to approximate a reflection about the kernel. 
(See \cref{eq:K_delt} for the explicit form of $K_{\widetilde{\Delta}, l}$.) 
We may note that $\ket{\mathbf{x}}$ and $\ket{e_{m-1}}$ by construction lie in the orthogonal complement of the kernel of $A_t$ and meanwhile it holds that $\ket{\mathbf{x}_t}\in \ker(G_t)$  and $\ket{\mathbf{x}_t} \perp$ $\ket{\mathbf{y}_t}$. These facts imply that $\ket{\mathbf{y}_t}$ lies in the orthogonal complement of the kernel of $G_t$. Thus, application of $U^{(G_t)}_{K_{\widetilde{\Delta},l}}$, implements an approximate reflection that flips the sign of $\ket{\mathbf{y}_t}$, up to a small correction, yielding
\begin{equation}
\cos(\theta_t)\ket{\mathbf{x}_t} - (1 - \delta_1')\sin(\theta_t)\ket{\mathbf{y}_t} + \delta_2'\sin(\theta_t)\ket{\mathbf{z}},
\end{equation}
where $\ket{\mathbf{z}}$ is a unit
vector orthogonal to  $\ket{\mathbf{x}_t}$ and $\ket{\mathbf{y}_t}$.
The small corrections $\delta'_1$ and $\delta'_2$ are each upper bounded by $\mathcal{O}(\eta_{\rm KR})$  (see \cref{eq:etas}), and vanish as the tunable parameter $\eta_{\rm KR} \rightarrow 0$. 
Note that these expressions refer to the final register components conditioned on ancilla outcomes in the second register of \cref{fig:Algo} being zero.

Finally, the algorithm implements a multicontrolled Toffoli gate (last gate of  \cref{fig:Algo}) 
and postselects on the target ancilla being $\ket{0}$. This has the effect of projecting the final register onto the  subspace orthogonal to $\ket{e_m}$, resulting in a state proportional to:
\begin{equation}
\label{eq:fin_state}
\ket{\tilde{\mathbf{x}}} \propto \cos(\theta_t)\sin(\theta_t)(2 - \delta_1')\ket{\mathbf{x}} + \delta_2'\sin(\theta_t)\ket{\mathbf{z}}.
\end{equation}

The total probability of success, corresponding to measuring all ancillas in the $\ket{0}$ state, is given by the squared norm of the unnormalized state above:
\begin{equation}
p_{\mathrm{succ}} = \frac{1}{4}\sin(2\theta_t)^2(2 - \delta_1')^2 + (\delta_2')^2\sin^2(\theta_t).
\end{equation}
Thus, we upper bound this success probability by considering the maximum of $\delta_2'$ and the minimum of $\delta_1'$, and vice-versa for the lower bound, i.e., 
\begin{equation}
\label{eq:P_bound}
\sin^2(2\theta_t)\left(\frac{1 - \eta_{\rm KR}}{1 + \eta_{\rm KR}}\right)^2 \leq p_{\mathrm{succ}} \leq \sin^2(2\theta_t) + \frac{4\eta_{\rm KR}^2}{(1 + \eta_{\rm KR})^2},
\end{equation}
Notice that from \cref{eq:P_bound} as $\eta_{\rm KR} \rightarrow 0$ and $t\approx \|x\|$  we have $p_{\mathrm{succ}} \rightarrow 1$.

Conditioned on success, the error $\lVert \ket{\mathbf{x}} - \ket{\tilde{\mathbf{x}}}\rVert $  can also be bounded by $\arcsin(\eta_{\rm KR}/\cos(\theta_t)) \approx \eta_{\rm KR}/\cos(\theta_t)$ (see Theorem 1 of \cite{dalzell2024shortcut}). Thus, if $\epsilon$ error is desired, one can choose $\eta_{\rm KR} = \epsilon \cos(\theta_t)$ (note, knowledge of $\| \mathbf{x} \|$ is required to compute $\theta_t)$. Alternatively, the output can be considered an ansatz state, and the KP method can be used to filter errors down to a smaller value of $\epsilon$, if desired. 

\section{Methods and results}
\label{Sec:result}

When comparing the performance of quantum linear solvers, it is important to distinguish between scenarios where the norm \( \|\mathbf{x}\| \) is known and where it is not. As discussed in~\cite{dalzell2024shortcut}, the kernel reflection algorithm requires an initial estimation of the norm up to a multiplicative error. This step is not present in other methods such as those based on the discrete adiabatic theorem or the randomization approach \cite{costa2023discrete}.

In the idealised setting where the norm of the solution is known, the Shortcut method does not require the filtering step; that is, the KR procedure alone achieves the optimal scaling 
$\mathcal{O}\bigl(\kappa \log(1/\epsilon)\bigr)$
in terms of the condition number $\kappa$ and the allowed solution error $\epsilon$. In fact, when the norm is known exactly, one may choose $\eta_{\rm KR} = \epsilon/\sqrt{2}$, and then derive a rigorous query complexity upper bound of $(1+\mathcal{O}(\epsilon))\kappa \log(2\sqrt{2}/\epsilon)$,  implying the constant factor is only 1, even in the worst case. This has the same cost as KP (filtering) to error $\eta_{\rm KP} = \epsilon$ up to a small additive $\kappa \ln(\sqrt{2})$ correction. For this reason, the Shortcut method will provide the lowest complexity in the case where the norm is known. In the known-norm scenario, we do not include a filtering step, and we benchmark both QLSAs for total allowable
errors $\epsilon = 10^{-2}$ and $\epsilon = 10^{-3}$. 
We analyse this cost in two distinct cases: (1) when the input matrix $A$ is
non-Hermitian, and (2) when $A$ is positive definite (PD). In both regimes, we
test matrices of sizes $32\times 32$ and $64\times 64$.


For the QW-based method, we estimate the total cost separately for the non-Hermitian (and non-positive) and positive-definite data sets, but using essentially the same set of adiabatic threshold errors, namely $\Delta \in \{0.40,0.30,0.20,0.15,0.10\}$. Here, $\Delta$ denotes the $\ell_2$-norm error in the output state of the adiabatic stage, prior to the final filtering step that reduces the error to a prescribed target solution error $\epsilon < \Delta$. For each data set, we use the resulting runs to interpolate the total cost as a function of $\Delta$, taking into account both the adiabatic evolution and the filtering stage, and thereby estimate the value of $\Delta$ that minimizes the total cost, (see Eq. (27) \cite{costa2023discrete}) for a given allowable final error $\epsilon$.

The depth of the circuit in the Shortcut method—and consequently the total cost—is determined by selecting the parameter $\eta_{\rm KR}$ in the polynomial defining the kernel reflection such that the resulting allowable error in the output state is as close as possible to the target error. For the known-norm case, we set the target error to $\epsilon$ as there is no subsequent filtering step. (Later, we will consider a two step approach, where the target error is $\Delta$, after which filtering is applied to reduce error to $\epsilon$.) Once $\eta_{KR}$ is fixed, and for a given value of $\widetilde{\Delta}:= \frac{1}{\kappa}$ in the kernel reflection function, the corresponding polynomial order \cref{eq:order:pol} determines the circuit depth. This order plays an analogous role to the number of steps necessary in the discrete adiabatic method described in \cite{costa2023discrete}, which is required to achieve the desired target error $\Delta$. In that case, the evolution progresses from $s = 0$ to $s = 1$ according to a given scheduling function.

Subsequently, we consider the more realistic scenario in which the norm of the solution is not known in advance. In this setting, we restrict our analysis to non-Hermitian matrices, and for the Shortcut method, we use the \textit{exhaustive search method in log space} variant, as proposed in~\cite[Algorithm 3]{dalzell2024shortcut}. This variant has near-optimal asymptotic cost of $\mathcal{O}(\kappa \log(\kappa)\log\log(\kappa) + \kappa \log(1/\epsilon))$, and it is simpler to implement than the variant from \cite[Algorithm 4]{dalzell2024shortcut} that achieves optimal $O(\kappa \log(1/\epsilon))$ complexity even when the norm is not known.

For the Shortcut method, we quantify the complexity in terms of calls to the (controlled) block-encoding $U_A$ of $A$. Specifically, if $\#U_A$ denotes the polynomial degree, then we define the effective cost as $\mathrm{Cost} = \#U_A/p_{\mathrm{succ}}$, where $p_{\mathrm{succ}}$ is the probability of measuring $\ket{0}$ in the ancillary registers. In the case of the Shortcut method, the number of calls to $U_A$ is given by the polynomial order of the KR, whereas in the QW method, it is given by the number of QW steps. This assumes each QW step has cost equal to the cost of one call to (controlled) $U_A$.  However, for non-Hermitian matrices, a QW step during the adiabatic part of the QW method requires applying a gate of the form $\ket{0}\bra{0} \otimes U_A + \ket{1}\bra{1}\otimes U_{A^\dagger}$ which controls between $A$ and its adjoint---in some cases, it may be appropriate to cost this operation as double the cost of controlled $U_A$. Thus, we also report in \cref{app:QWfilt} the total cost of the QW method by doubling the cost of the adiabatic part. Results are shown for both positive definite and non-Hermitian matrices as a function of the condition number. 

For each condition number $\kappa$, we generate 100 independent random instances $(A,\mathbf{b})$ and report the average number of $U_A$ calls. We first sample a matrix $A\in\mathbb{R}^{2^n\times 2^n}$ with i.i.d.\ entries drawn uniformly from $[0,2]$, and compute its singular value decomposition as $A = U \Sigma V^\top$. To enforce the target condition number $\kappa$, we replace the singular values of $\Sigma$ by affinely rescaled singular values collected in a diagonal matrix $\Sigma'$, chosen so that $\sigma_{\max}(\Sigma')/\sigma_{\min}(\Sigma') = \kappa$. We then set $M \coloneqq U \Sigma' V^\top$ for the non-Hermitian instances. For the PD instances, we instead set $M\coloneqq V \Sigma' V^\top$, which yields a symmetric positive-definite matrix.

\subsection{Performance in the Known-Norm Regime}

It is important to note that this analysis excludes the cost of the norm estimation procedure; it accounts only for the kernel reflection steps.

\subsubsection{Non-Hermitian matrices}

Rather than selecting a distinct value of $\eta_{\rm KR}$ for each matrix instance to meet the target error, we determine a single global value of $\eta_{\rm KR}$ for all instances associated with a fixed condition number. This global $\eta_{\rm KR}$ is chosen such that the average error across all 100 non-Hermitian matrix samples matches the desired target threshold. This strategy is also employed in~\cite{costa2023discrete}, but there in terms of the walk steps: the total number of walk steps is fixed so that, on average over all tested instances, the error matches the prescribed target threshold.

While the primary focus in~\cite{costa2023discrete} was to study the performance of the improved randomization method from~\cite{jennings2023efficient}—particularly in the low condition number regime where it was argued to outperform the discrete adiabatic method—we extend the analysis here to further explore the behaviour of the discrete adiabatic and Randomised method across a broader parameter range; see \cref{sec:further}. Accordingly, for each matrix dimension considered, we compare the complexities of the Shortcut and discrete adiabatic method. 

From the results in \cref{tab:shortcut_only_d04_dim32,tab:shortcut_only_d04_dim64,tab:shortcut_only_d001_dim32,tab:shortcut_only_d001_dim64}, where both choices of allowable error for the solution $\epsilon$ are tested, namely $\epsilon=10^{-2}$ and $\epsilon=10^{-3}$  the Shortcut method on non-Hermitian instances achieves a lower cost than the QW method, \emph{even when counting only the adiabatic component} of the latter (\cref{tab:costNH_32,tab:costNH_64}). Thus, in this idealized scenario, the Shortcut method outperforms the QW approach. We also see that the practical performance on this ensemble of random matrices is slightly better than the worst-case bound from \cite[Eq.~(22)]{dalzell2024shortcut} of $\sim \kappa \ln(2 \sqrt{2}/\epsilon)$ pairs of calls to $U_A$, which evaluates to $5.64\kappa$ and $7.94\kappa$ for $\epsilon = 0.01$ and $\epsilon = 0.001$, respectively. 

\begin{table}[h!]
\centering
\caption{Shortcut method performance for $\epsilon=0.01$ on $A\in\mathbb{R}^{32\times 32}$ (100 non-Hermitian instances). 
For each condition number $\kappa$, the parameter $\eta_{\rm KR}$ is chosen so that the average error is $\le 0.01$.  We report the normalized cost $\mathrm{Cost}_{\mathrm{avg}}/\kappa$ for different $\kappa$ values.}
\label{tab:shortcut_only_d04_dim32}
\renewcommand{\arraystretch}{1.2}
\begin{tabular}{|c| c |c |c|}
\hline
$\kappa$ & $\eta_{\rm KR}$ & $\text{Cost}_{\mathrm{avg}}/\kappa$ & $\text{Error}_{\mathrm{avg}}$ \\
\hline
20     &$0.009$    &$5.58$ & $7.00 \times10^{-3}$ \\
40     &$0.016$    &$5.20$ & $9.20 \times10^{-3}$\\
80     &$0.016$    &$4.95$ & $9.80 \times10^{-3}$  \\
160    &$0.016$    &$4.76$ & $9.60 \times10^{-3}$ \\
320    &$0.025$    &$4.50$ & $1.00 \times10^{-2}$\\
640    &$0.030$    &$4.31$ & $9.90 \times10^{-3}$\\
1280   &$0.005$    &$3.86$ & $9.80 \times10^{-3}$\\
2560   &$0.070$    &$3.38$ & $9.50 \times10^{-3}$\\
\hline
\end{tabular}
\end{table}

\begin{table}[h!]
\centering
\caption{Shortcut method performance for $\epsilon=0.01$ on $A\in\mathbb{R}^{64\times 64}$ (100 non-Hermitian instances). 
For each condition number $\kappa$, the parameter $\eta_{\rm KR}$ is chosen so that the average error is $\le 0.01$. We report the normalized cost $\mathrm{Cost}_{\mathrm{avg}}/\kappa$ for different $\kappa$ values.}
\label{tab:shortcut_only_d04_dim64}
\renewcommand{\arraystretch}{1.2}
\begin{tabular}{|c| c |c |c|}
\hline
$\kappa$ & $\eta_{\rm KR}$ & $\text{Cost}_{\mathrm{avg}}/\kappa$ & $\text{Error}_{\mathrm{avg}}$ \\
\hline
20     &$0.009$    &$5.59$ & $7.60 \times10^{-3}$ \\
40     &$0.010$    &$5.35$ & $9.00 \times10^{-3}$\\
80     &$0.012$    &$5.23$ & $9.70 \times10^{-3}$  \\
160    &$0.012$    &$5.01$ & $9.90 \times10^{-3}$ \\
320    &$0.018$    &$4.84$ & $9.70 \times10^{-3}$\\
640    &$0.023$    &$4.59$ & $9.10 \times10^{-3}$\\
1280   &$0.032$    &$4.25$ & $1.00 \times10^{-2}$\\
2560   &$0.042$    &$3.95$ & $9.20 \times10^{-3}$\\
\hline
\end{tabular}
\end{table}

\begin{table}[h!]
\centering
\caption{Shortcut method performance for $\epsilon=0.001$ on $A\in\mathbb{R}^{32\times 32}$, averaged over 100 non-Hermitian instances. For each condition number $\kappa$, we select $\eta_{\rm KR}$ such that the average error is at most $10^{-3}$. We report the normalized cost $\mathrm{Cost}_{\mathrm{avg}}/\kappa$ for different $\kappa$ values.}
\label{tab:shortcut_only_d001_dim32}
\renewcommand{\arraystretch}{1.2}
\begin{tabular}{|c|c|c|c|}
\hline
$\kappa$ & $\eta_{\rm KR}$ & $\text{Cost}_{\mathrm{avg}}/\kappa$ & $\text{Error}_{\mathrm{avg}}$ \\
\hline
20    & $0.0012$ & $7.52$ & $9.74 \times 10^{-4}$ \\
40    & $0.0014$ & $7.30$ & $9.80 \times 10^{-4}$ \\
80    & $0.0016$ & $7.18$ & $1.00 \times 10^{-3}$ \\
160   & $0.0020$ & $6.96$ & $1.00 \times 10^{-3}$ \\
320   & $0.0025$ & $6.76$ & $9.70 \times 10^{-4}$ \\
640   & $0.0029$ & $6.60$ & $9.35 \times 10^{-4}$ \\
1280  & $0.0050$ & $6.24$ & $9.86 \times 10^{-4}$ \\
2560  & $0.0090$ & $5.54$ & $1.00 \times 10^{-4}$ \\
\hline
\end{tabular}
\end{table}

\begin{table}[h!]
\centering
\caption{Shortcut method performance for $\epsilon=0.001$ on $A\in\mathbb{R}^{64\times 64}$, averaged over 100 non-Hermitian instances. For each condition number $\kappa$, we choose $\eta_{\rm KR}$ such that the average error is at most $10^{-3}$. We report the normalized cost $\mathrm{Cost}_{\mathrm{avg}}/\kappa$ for different $\kappa$ values.}
\label{tab:shortcut_only_d001_dim64}
\renewcommand{\arraystretch}{1.2}
\begin{tabular}{|c|c|c|c|}
\hline
$\kappa$ & $\eta_{\rm KR}$ & $\text{Cost}_{\mathrm{avg}}/\kappa$ & $\text{Error}_{\mathrm{avg}}$ \\
\hline
20   & $0.0011$ & $7.62$ & $9.20 \times 10^{-4}$ \\
40   & $0.0012$ & $7.46$ & $9.96 \times 10^{-4}$ \\
80   & $0.0103$ & $7.36$ & $1.00 \times 10^{-3}$ \\
160  & $0.0015$ & $7.22$ & $1.00 \times 10^{-4}$ \\
320  & $0.0016$ & $7.20$ & $9.70 \times 10^{-4}$ \\
640  & $0.0230$ & $6.82$ & $1.00 \times 10^{-3}$ \\
1280 & $0.0030$ & $6.58$ & $1.00 \times 10^{-3}$ \\
2560 & $0.0056$ & $5.98$ & $1.00 \times 10^{-3}$ \\
\hline
\end{tabular}
\end{table}

\subsubsection{Positive definite matrices}

We next test the Shortcut method on positive-definite matrices with a known solution norm.
For these instances, the QW method is known to have a substantially smaller constant factor than
in the non-Hermitian case—approximately $\alpha \approx 0.17$ for PD versus $\alpha \approx 1.84$
for non-Hermitian,  where $\alpha$ is the constant factor used in the adiabatic part for a given  error $\Delta$, i.e., $T=\alpha \kappa/\Delta$, as reported in~\cite{costa2023discrete}. We therefore benchmark both methods side by side under these conditions.

\begin{table}[tbh]
\centering
\caption{Shortcut method performance for $\epsilon=0.01$ on $A\in\mathbb{R}^{32\times 32}$ (100 PD instances). 
For each condition number $\kappa$, the parameter $\eta_{\rm KR}$ is chosen so that the average error is $\le 0.01$.We report the normalized cost $\mathrm{Cost}_{\mathrm{avg}}/\kappa$ for different $\kappa$ values.}
\label{tab:shortcut_only_d04_dim32PD}
\renewcommand{\arraystretch}{1.2}
\begin{tabular}{|c| c |c |c|}
\hline
$\kappa$ & $\eta_{\rm KR}$ & $\text{Cost}_{\mathrm{avg}}/\kappa$ & $\text{Error}_{\mathrm{avg}}$ \\
\hline
20     &$0.021$    &$4.70$ & $9.90 \times10^{-3}$ \\
40     &$0.025$    &$4.49$ & $9.40 \times10^{-3}$\\
80     &$0.030$    &$4.30$ & $9.30 \times10^{-3}$  \\
160    &$0.040$    &$4.01$ & $9.00 \times10^{-3}$ \\
320    &$0.070$    &$3.49$ & $1.04 \times10^{-2}$\\
640    &$0.110$    &$3.19$ & $1.01 \times10^{-2}$\\
1280   &$0.100$    &$4.23$ & $1.00 \times10^{-2}$\\
2560   &$0.150$    &$3.02$ & $9.40 \times10^{-3}$\\
\hline
\end{tabular}
\end{table}

\begin{table}[tbh]
\centering
\caption{Shortcut method performance for $\epsilon=0.01$ on $A\in\mathbb{R}^{64\times 64}$, averaged over 100 PD instances. For each condition number $\kappa$, we choose $\eta_{\rm KR}$ such that the average error is at most $10^{-2}$. For each condition number $\kappa$, we choose $\eta_{\rm KR}$ such that the average error is at most $10^{-2}$. We report the normalized cost $\mathrm{Cost}_{\mathrm{avg}}/\kappa$ for different $\kappa$ values.}
\label{tab:shortcut_only_d001_dim64PD}
\renewcommand{\arraystretch}{1.2}
\begin{tabular}{|c|c|c|c|}
\hline
$\kappa$ & $\eta_{\rm KR}$ & $\text{Cost}_{\mathrm{avg}}/\kappa$ & $\text{Error}_{\mathrm{avg}}$ \\
\hline
20   & $0.020$ & $4.80$ & $9.40 \times 10^{-3}$ \\
40   & $0.023$ & $4.62$ & $9.50 \times 10^{-3}$ \\
80   & $0.024$ & $4.52$ & $8.90 \times 10^{-3}$ \\
160  & $0.028$ & $4.38$ & $9.80 \times 10^{-3}$ \\
320  & $0.032$ & $4.24$ & $9.90 \times 10^{-3}$ \\
640  & $0.038$ & $4.04$ & $9.20 \times 10^{-3}$ \\
1280 & $0.044$ & $3.90$ & $9.80 \times 10^{-3}$ \\
2560 & $0.065$ & $3.52$ & $1.00 \times 10^{-2}$ \\
\hline
\end{tabular}
\end{table}

\begin{table}[h!]
\centering
\caption{Shortcut method performance for $\epsilon=0.001$ on $A\in\mathbb{R}^{32\times 32}$ (100 PD instances). 
For each condition number $\kappa$, the parameter $\eta_{\rm KR}$ is chosen so that the average error is $\le 0.001$. We report the normalized cost $\mathrm{Cost}_{\mathrm{avg}}/\kappa$ for different $\kappa$ values.}
\label{tab:shortcut_only_d001_dim32PD}
\renewcommand{\arraystretch}{1.2}
\begin{tabular}{|c| c |c |c|}
\hline
$\kappa$ & $\eta_{\rm KR}$ & $\text{Cost}_{\mathrm{avg}}/\kappa$ & $\text{Error}_{\mathrm{avg}}$ \\
\hline
20    &$0.0024$    &$6.80$ & $9.91 \times10^{-4}$ \\
40    &$0.0027$    &$6.65$ & $1.00 \times10^{-3}$\\
80    &$0.0036$    &$6.38$ & $1.00 \times10^{-3}$  \\
160   &$0.0046$    &$6.14$ & $1.00 \times10^{-3}$ \\
320   &$0.0072$    &$5.71$ & $1.00 \times10^{-3}$\\
640   &$0.0084$    &$5.56$ & $9.71 \times10^{-4}$\\
1280  &$0.0120$    &$5.20$ & $9.94 \times10^{-4}$\\
2560  &$0.0170$    &$4.86$ & $1.00 \times10^{-3}$\\
\hline
\end{tabular}
\end{table}

\begin{table}[h!]
\centering
\caption{Shortcut method performance for $\epsilon=0.001$ on $A\in\mathbb{R}^{64\times 64}$ (100 PD instances). 
For each condition number $\kappa$, the parameter $\eta_{\rm KR}$ is chosen so that the average error is $\le 0.001$. We report the normalized cost $\mathrm{Cost}_{\mathrm{avg}}/\kappa$ for different $\kappa$ values.}
\label{tab:shortcut_only_d001_dim64PD2}
\renewcommand{\arraystretch}{1.2}
\begin{tabular}{|c| c |c |c|}
\hline
$\kappa$ & $\eta_{\rm KR}$ & $\text{Cost}_{\mathrm{avg}}/\kappa$ & $\text{Error}_{\mathrm{avg}}$ \\
\hline
20     &$0.0024$    &$6.80$ & $1.00 \times10^{-3}$ \\
40     &$0.0024$    &$6.75$ & $9.89 \times10^{-4}$\\
80     &$0.0028$    &$6.60$ & $1.00 \times10^{-3}$  \\
160    &$0.0032$    &$6.49$ & $1.00 \times10^{-3}$ \\
320    &$0.0036$    &$6.38$ & $9.53 \times10^{-4}$\\
640    &$0.0044$    &$6.19$ & $1.00 \times10^{-3}$\\
1280   &$0.0066$    &$5.80$ & $9.87 \times10^{-4}$\\
2560   &$0.0082$    &$5.58$ & $1.00 \times10^{-3}$\\
\hline
\end{tabular}
\end{table}

\begin{figure}[H]
    \centering
\includegraphics[width=0.45\textwidth]{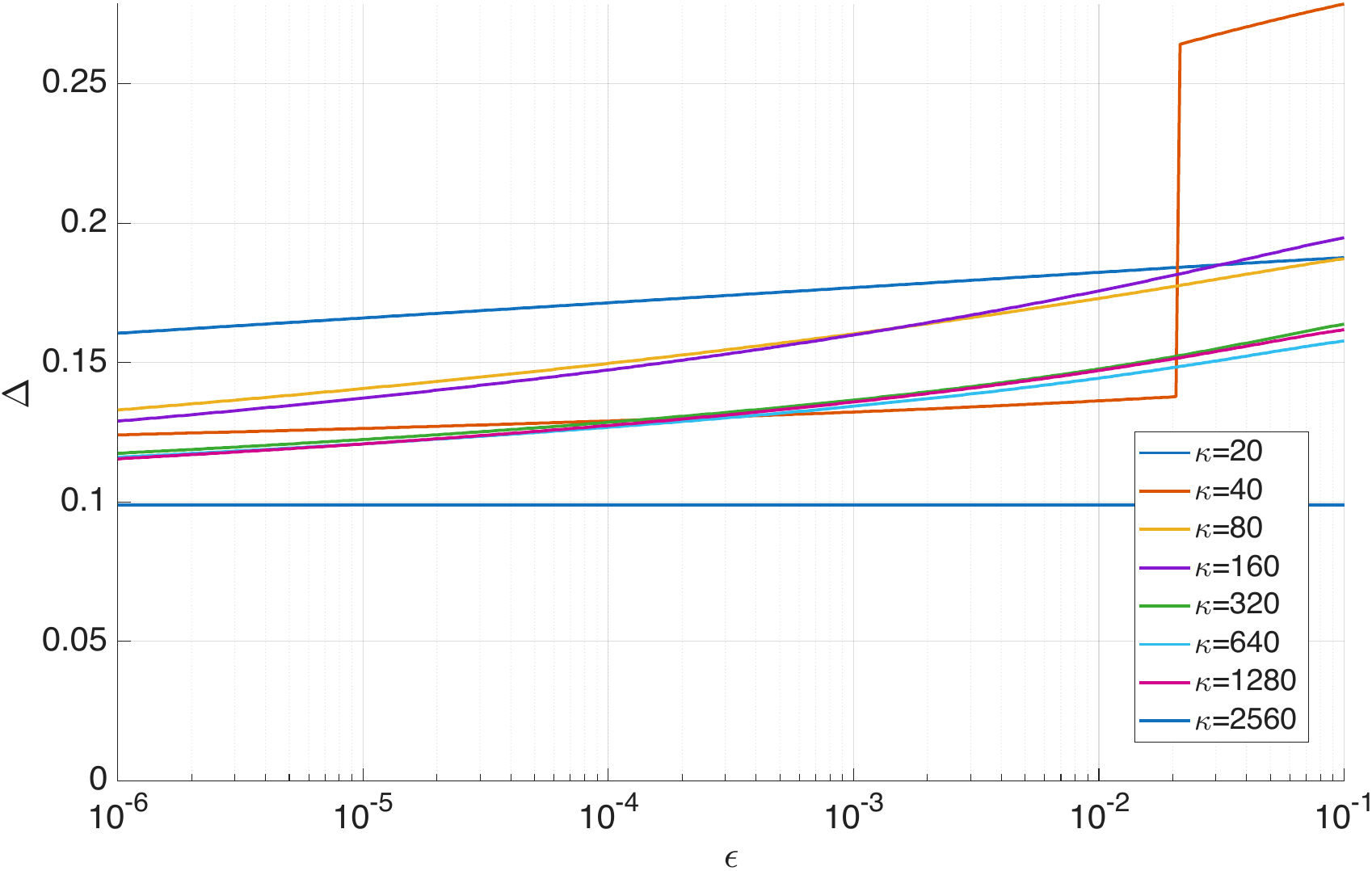}
    \caption{The plot shows the recommended values of $\Delta$, for the QW method, as a function of the tested condition number, for the set of positive definite matrices $A\in\mathbb{C}^{32\times 32}$.}
    \label{fig:Delta32}
\end{figure}

\begin{figure}[H]
    \centering
\includegraphics[width=0.45\textwidth]{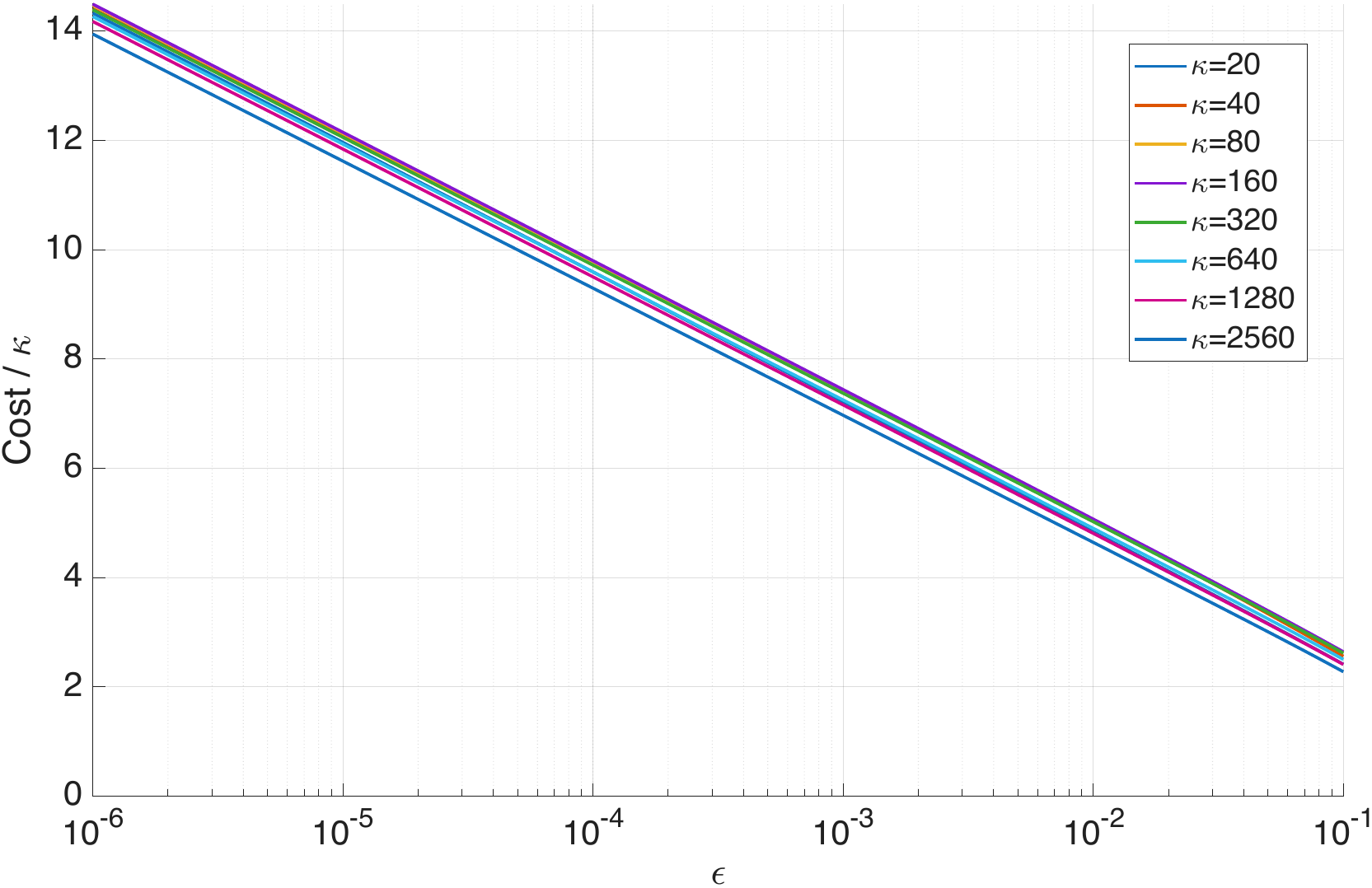}
    \caption{Total cost of the QW method for different target solution-error tolerances, using the optimal step sizes $\Delta$ (as estimated from \cref{fig:Delta32}) for each tested condition number, over the set of positive definite matrices $A\in\mathbb{C}^{32\times 32}$.}
    \label{fig:Cost32}
\end{figure}

\begin{figure}[H]
    \centering
\includegraphics[width=0.45\textwidth]{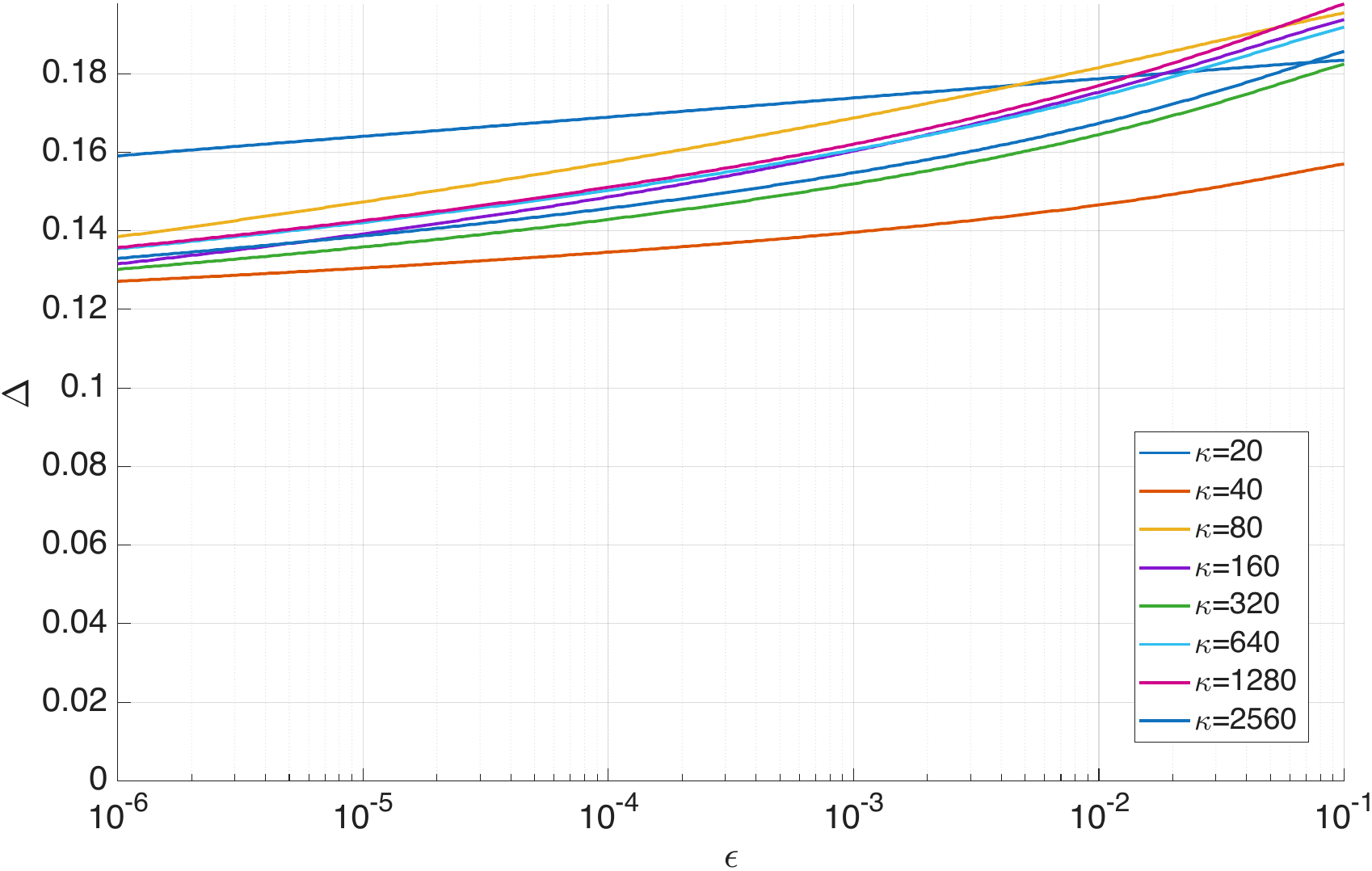}
    \caption{The plot shows the recommended values of $\Delta$, for the QW method, as a function of the tested condition number, for the set of positive definite matrices $A\in\mathbb{C}^{64\times 64}$.}
    \label{fig:Delta64}
\end{figure}

\begin{figure}[H]
    \centering
\includegraphics[width=0.45\textwidth]{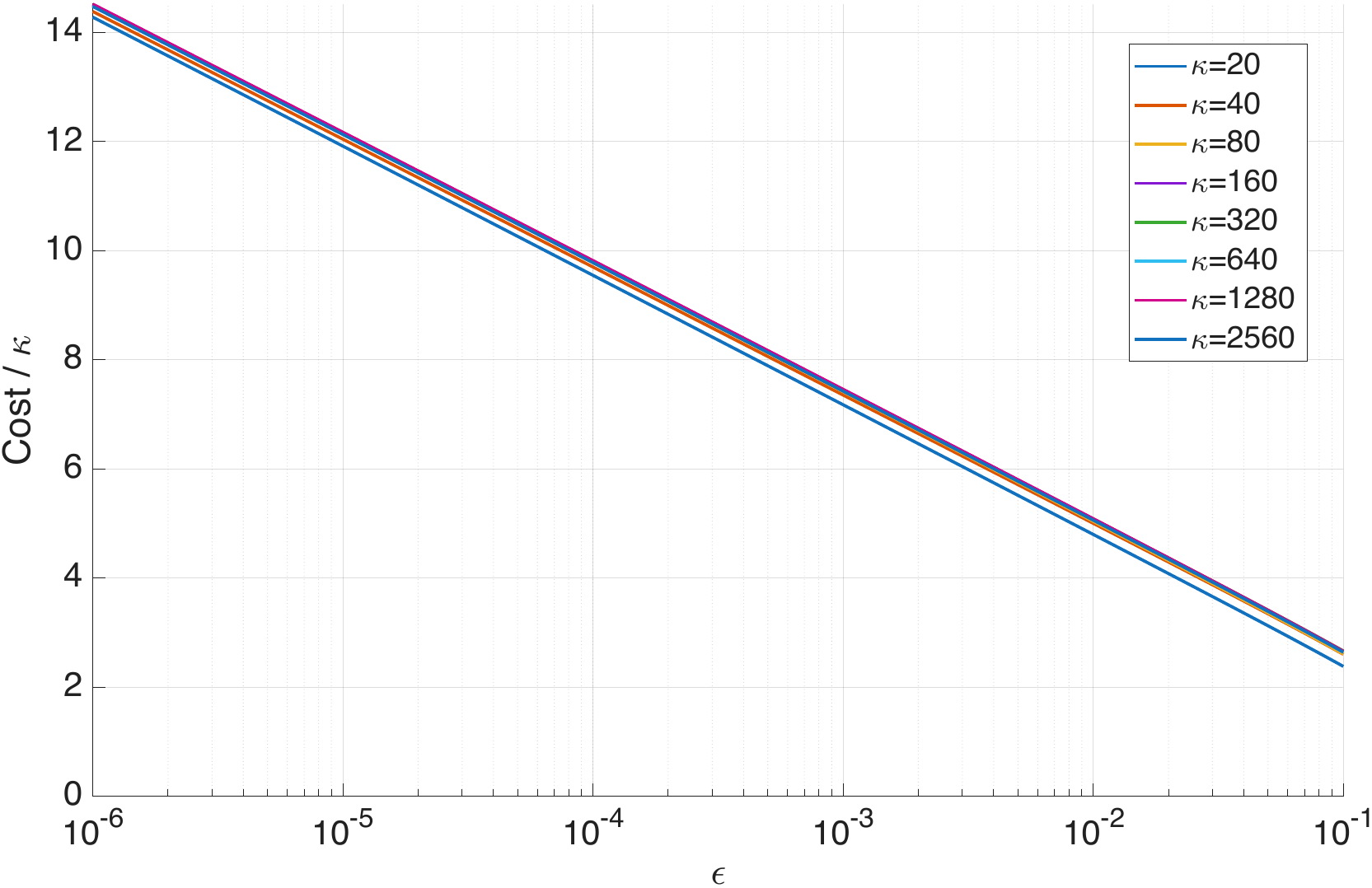}
    \caption{Total cost of the QW method for different target solution-error tolerances, using the optimal step sizes $\Delta$ (as estimated from \cref{fig:Delta64}) for each tested condition number, over the set of positive definite matrices $A\in\mathbb{C}^{64\times 64}$.}
    \label{fig:Cost64}
\end{figure}

\clearpage

The results for the adiabatic part, for different values of $\Delta$, for the same matrices tested with the Shortcut method are presented in \cref{tab:cost_PD_32,tab:cost_PD_64}. We then used the results to estimate the optimal values of $\Delta$ that minimise the total cost (adiabatic plus filtering) for a given total allowable error $\epsilon$. 

The results indicate that regardless of the dimension considered, with the optimal values \cref{fig:Delta32} and \cref{fig:Delta64}, the total cost of the QW method, \cref{fig:Cost32} and \cref{fig:Cost64} (see the ratios given for $10^{-2}$ and $10^{-3}$) for the PD almost matches the results for the Shortcut when the norm of the solution is known, with a slight improvements for the Shortcut method for the largest condition numbers, namely $\kappa=1280,2560$. However, in the case that we count the QW step as a twice number of steps in the adiabatic part, the shortcut method will outperform the QW method; see \cref{fig:Cost32double,fig:Cost64double} for $10^{-2}$ and $10^{-3}$.  

\subsection{Performance in the Unknown-Norm Regime}\label{sec:unk_regime}

We also tested the Shortcut method for non-Hermitian matrices when the norm is unknown for $32\times 32$ and $64\times64$ non-Hermitian matrices, but only up to $\kappa=320$. In this situation, the Shortcut method also requires the filtering step, provided by the KP. Specifically, we simulated  Algorithm 3 of  \cite{dalzell2024shortcut}, which is a variant of the ``exhaustive search in log space'' described in Section 5.1 of \cite{dalzell2024shortcut}. The idea there is to make random guesses for the norm, and try running the KR algorithm with a fixed choice of $\eta_{\rm KR}$. This succeeds some of the time and produces an output state with error $\Delta$ conditioned on success. This output state is subsequently filtered to error $\epsilon$.
Rather than take the total cost including the filter based on the KP as proposed in \cite{dalzell2024shortcut}, we consider the cost based on the LCU filtering method from \cite{costa2022optimal}.
The motivation is that, if there is failure (i.e., the state is measured as orthogonal to the solution), then it will typically be detected early.
On average, a failure is detected halfway through the LCU, halving the average cost, whereas KP needs to be complete before a failure is detected.
For further explanation, see \cref{app:QWfilt} or Section 4 of \cite{costa2023discrete}.

Before presenting the numerical results, we briefly describe how we test the Shortcut method in the unknown-norm regime. The method to calculate the infidelity is given in Appendix D of Ref.~\cite{dalzell2024shortcut}.
Equation (85) of Ref.~\cite{dalzell2024shortcut} gives an expression for the success probability averaged over random choice of $t$:
\begin{align}
    q_{\rm succ} 
    &= \frac 1{2\ln(\mathcal{R}/\mathcal{L})+2}\left( \mathcal{Q}_\mathcal{L} + \mathcal{Q}_\mathcal{R} + 2\int_\mathcal{L}^\mathcal{R} d\tau \, \mathcal{Q}_t  \right),
\end{align}
where $t=e^\tau$, $\mathcal{Q}_t$ is the probability of success for a given $t$, $\mathcal{L}=1$ and $\mathcal{R}=\kappa$.
The infidelity can then be determined using Eq.~(98) of Ref.~\cite{dalzell2024shortcut}, which is
\begin{align}
    1-\langle\mathbf{x}|\tilde\rho|\mathbf{x}\rangle &= \frac{1}{q_{\rm succ}(2\ln(\mathcal{R}/\mathcal{L})+2)}\nn
    &\quad \times
    \left( \mathcal{Q}_\mathcal{L}\mu_\mathcal{L}^2 + \mathcal{Q}_\mathcal{R}\mu_\mathcal{R}^2 + 2\int_\mathcal{L}^\mathcal{R} d\tau \, \mathcal{Q}_t  \mu_t^2 \right),
\end{align}
where $\mu_t^2$ is the infidelity for the particular choice of $t$.

In that work, upper bounds are determined analytically, but in numerical testing we perform the integral numerically by calculating $\mathcal{Q}_t$ and $\mu_t$ for selected values of $t$.
We used Clenshaw–Curtis quadrature, which provides highly accurate results with only 30 samples of $t$.
As this method considers a mixed state, $\Delta$ will be the Bures distance, which can be determined from the infidelity $\mu^2$ using
\begin{equation}
    \Delta = \sqrt{2\left(1-\sqrt{1-\mu^2}\right)},
\end{equation}
which is the same as the formula for the norm of the difference of pure states. We also record the cost of the Shortcut method, defined as the degree of the polynomial required for the chosen value of~$\eta_{\rm KR}$. In our tests, $\eta_{\rm KR}$ is fixed for each matrix instance, so the cost is constant across guesses for that instance. For each of the 100 random matrix instances, we compute both the error~$\Delta_i$ and the corresponding cost, and then average these per-instance quantities over the ensemble. The value of~$\eta_{\rm KR}$ is chosen so that the mean error remains below the prescribed threshold~$\Delta$.

We then used the recommended values of $\Delta$ (see \cref{app:recom_delta}) to estimate the total cost for the Shortcut method, reported in \cref{fig:Cost_Sht32}. The results in \cref{fig:Cost_Sht32} may be compared to Eq.~(128) of \cite{dalzell2024shortcut}, which rigorously upper bounds the cost as (after some simplifications)
\begin{align}
    \frac{\mathrm{cost}}{\kappa} \leq 6\ln(\kappa) + 6 + 1.07 \ln(1/\epsilon) + (6+3\ln(\kappa))/\kappa\,.
\end{align}
As expected, the numerical results averaged over the instances exhibit lower constant factors than the rigorous worst-case analysis. As one example,  at $\kappa = 320$ and $\epsilon = 10^{-3}$ the analytic upper bound gives $\mathrm{cost}/\kappa \approx 48$, which is about than 60\% larger than the numerically computed value.

\begin{figure}[H]
    \centering
\includegraphics[width=0.45\textwidth]{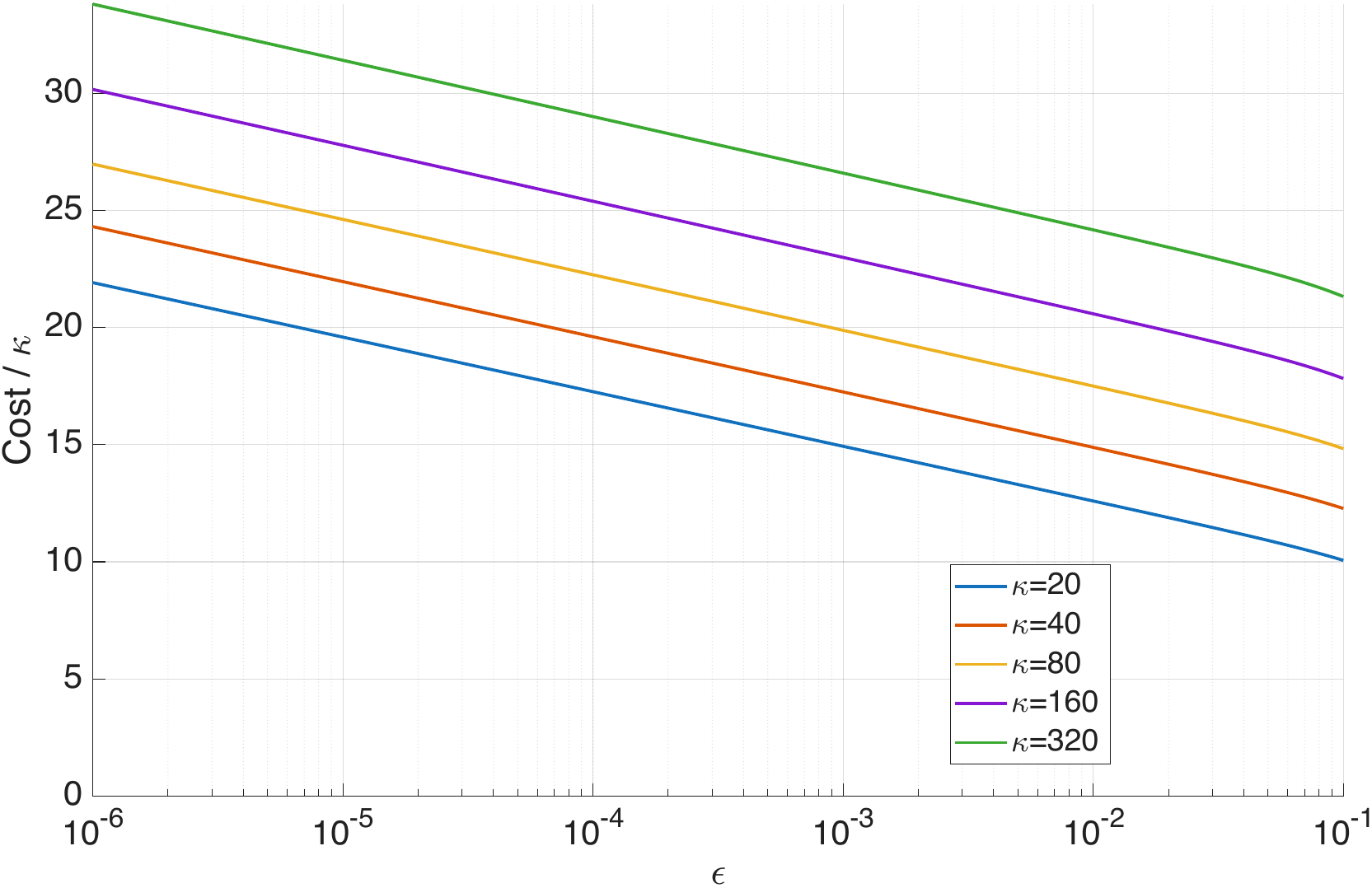}
    \caption{Total cost for the Shortcut method for different target solution-error tolerances, using the optimal target error $\Delta$ (as estimated from \cref{fig:recomenSH_delta32}) for each tested condition number, over the set of non-Hermitian matrices $A\in\mathbb{C}^{32\times 32}$.}
    \label{fig:Cost_Sht32}
\end{figure}

We then used the recommended values of $\Delta$ (see \cref{app:recom_delta})  to estimate the total cost for the QW method, \cref{fig:Cost_QW_NH32}, for the same non-Hermitian matrices tested for the Shortcut method.

\begin{figure}[H]
    \centering
\includegraphics[width=0.45\textwidth]{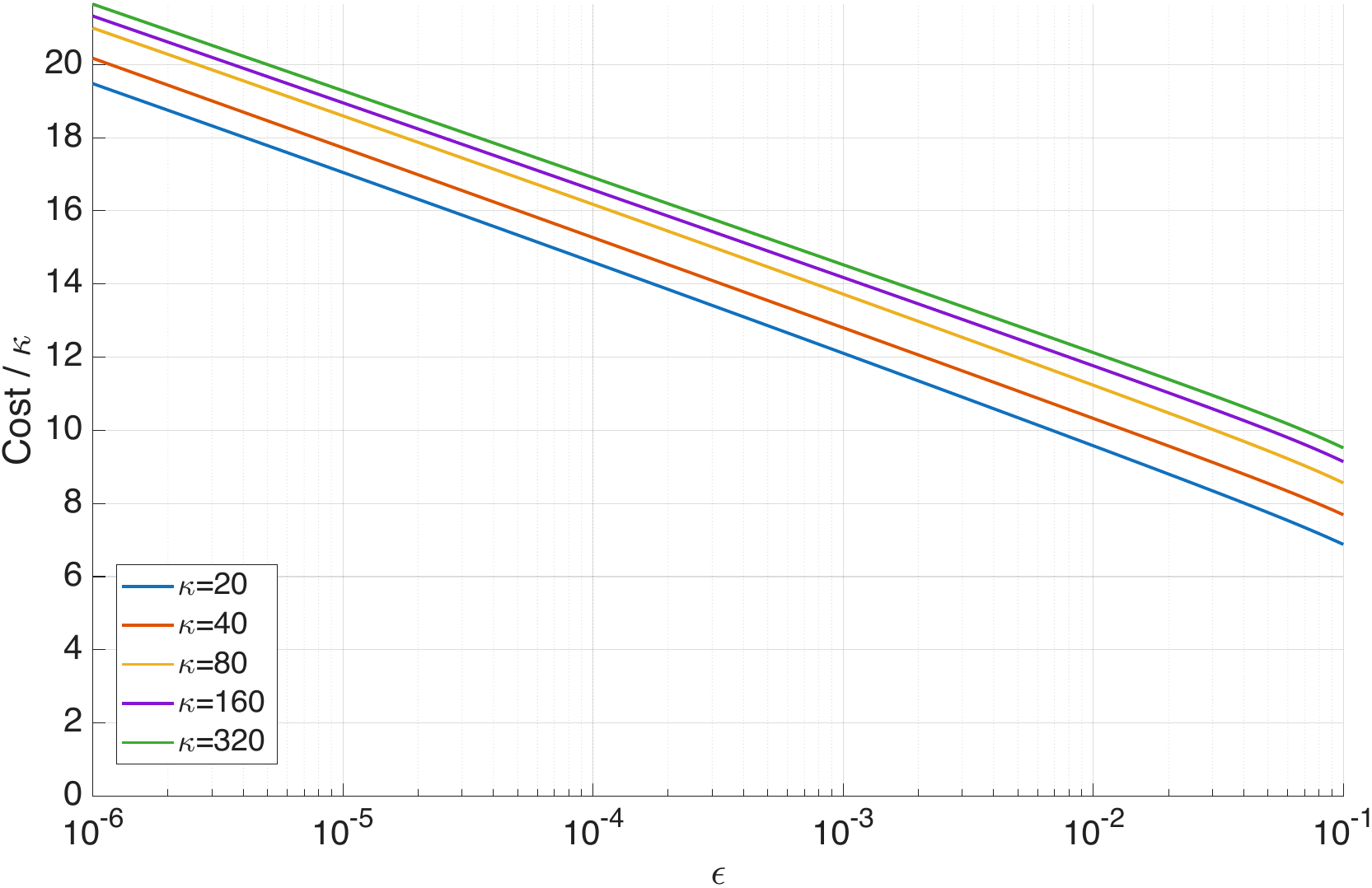}
    \caption{Total cost for the QW method for different target solution-error tolerances, using the optimal step sizes $\Delta$ (as estimated from \cref{fig:delta_NH32_QW}) for each tested condition number, over the set of non-Hermitian matrices $A\in\mathbb{C}^{32\times 32}$.}
    \label{fig:Cost_QW_NH32}
\end{figure}

From our $32\times 32$ testing, we observe that for practically relevant choices of the total allowable solution error $\epsilon$, the QW method  can outperform the Shortcut method (with exhaustive search for the norm) reaching at most a factor of approximately $2.3$ for a target with low accuracy $\epsilon$ (ratios for $\epsilon=10^{-1}$ in \cref{fig:Cost_Sht32} vs \cref{fig:Cost_QW_NH32}) for $\kappa=320$, however reducing it to a factor of $1.4$ in the regime of high accuracy for $\epsilon$ (ratio for $\epsilon=10^{-6}$). On the other hand, for the smallest value of the condition number $\kappa=20$, the cost for both is equivalent.

We next analyze the results for $64\times 64$ matrices, for both methods, following the recommended values for $\Delta$ (see \cref{app:recom_delta}).  Overall, the total costs of both the QW and Shortcut methods remain close to those observed in the $32\times 32$ case. Notably, we observe a slight cost improvement for the Shortcut method (see \cref{fig:Cost_Sh_NH64}) compared to the $32\times 32$ setting, while the QW method exhibits almost the same cost (see \cref{fig:Cost_QW_NH64}), but overall the QW remains as the lowest cost method. On the other hand, if we double the walk steps in the adiabatic method, the Shortcut method performs slightly better than the QW for the lowest condition numbers, namely $\kappa=20$ and $\kappa=40$, see \cref{fig:Cost_QW_NH32double,fig:Cost_QW_NH64double}.

\begin{figure}[H]
    \centering
\includegraphics[width=0.45\textwidth]{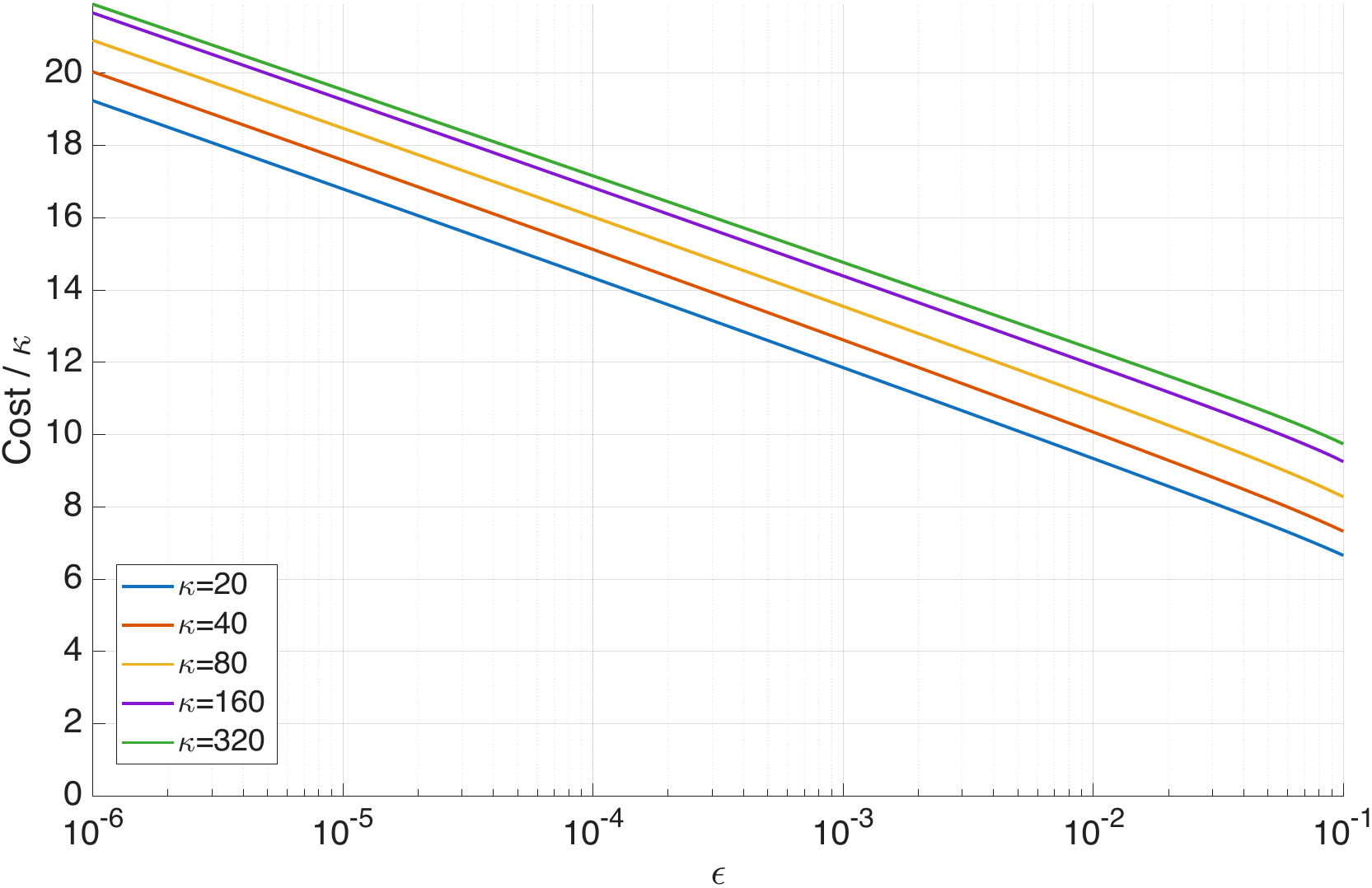}
    \caption{Total cost for the QW method for different target solution-error tolerances, using the optimal step sizes $\Delta$ (as estimated from \cref{fig:delta_NH64_QW}) for each tested condition number, over the set of non-Hermitian matrices $A\in\mathbb{C}^{64\times 64}$.}
    \label{fig:Cost_QW_NH64}
\end{figure}

\begin{figure}[H]
    \centering
\includegraphics[width=0.45\textwidth]{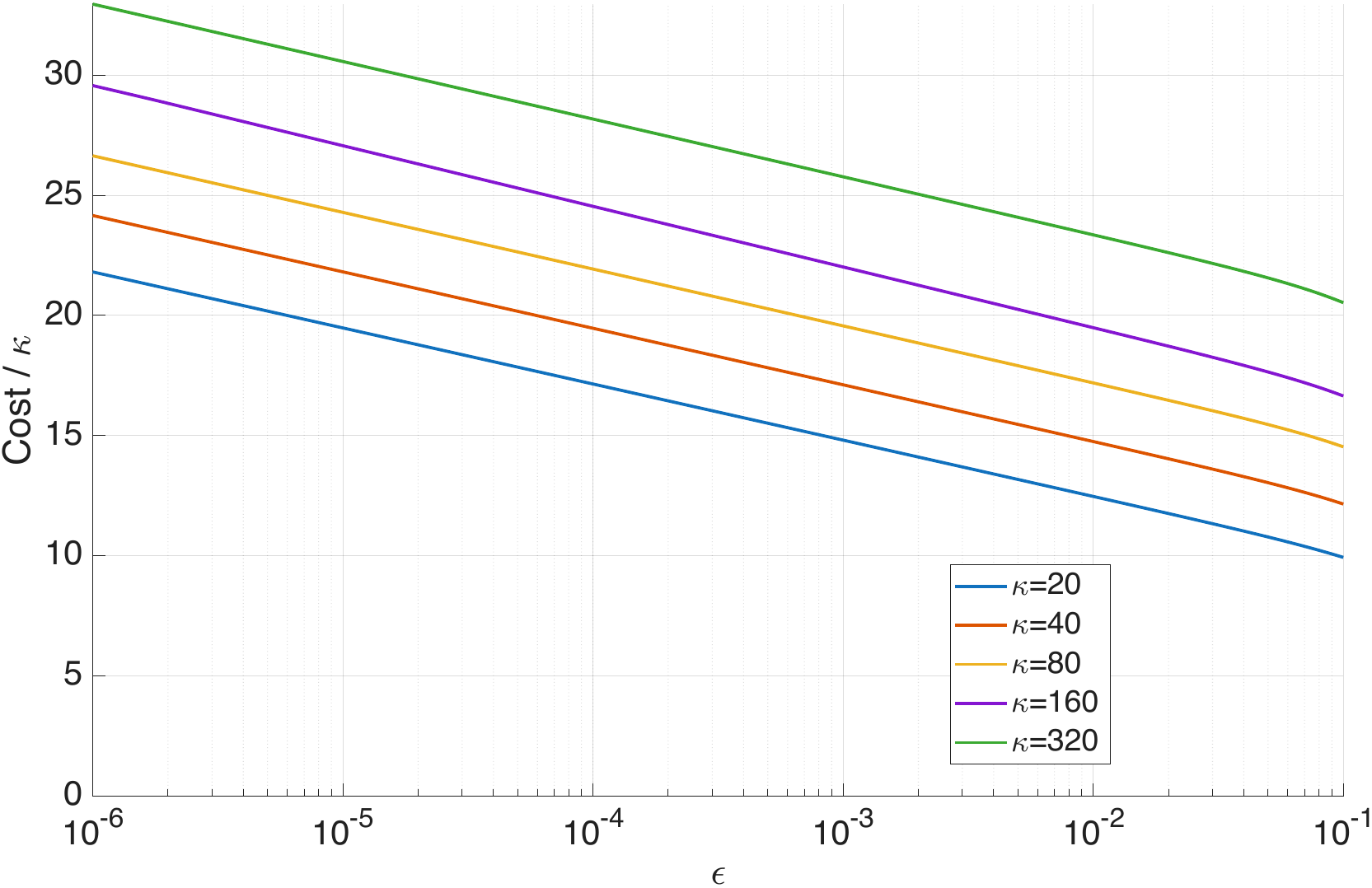}
    \caption{Total cost for the Shortcut method for different target solution-error tolerances, using the optimal step sizes $\Delta$ (as estimated from \cref{fig:delta_NH64_Sh}) for each tested condition number, over the set of non-Hermitian matrices $A\in\mathbb{C}^{64\times 64}$.}
    \label{fig:Cost_Sh_NH64}
\end{figure}

\paragraph{Sparse matrix numerical testing.}

We also benchmarked sparse instances to contrast with the artificial dense cases considered earlier. Following \cite{Costa_QLSP_Shortcut_2025}, the sparse test matrices $A$ were generated using PDE-like routines: we began with a random tridiagonal (Laplacian-like) stencil and, optionally, added a sparse perturbation with user-controlled density. Figure~\ref{fig:sparsity} displays representative sparsity patterns for the tested instances.

We compared the two QLSP solvers (Shortcut and QW) at target error tolerances \(\Delta\in\{0.4,\,0.3\}\) for matrix sizes \(n\in\{32,\,64\}\). Because the generator does not directly prescribe the condition number, we drew \(100\) instances and grouped them into three condition-number overlapping sets:
\begin{align}
&\kappa_1\in[10,30],\quad  (\kappa_1)^{(32\times32)}_{\rm avg}=17.65,\quad (\kappa_1)^{(64\times64)}_{avg}=14.27\nn 
&\kappa_2\in[20,40],\quad  (\kappa_2)^{(32\times32)}_{\rm avg}=26.93,\quad (\kappa_2)^{(64\times64)}_{avg}=26.98\nn
&\kappa_3\in[30,50],\quad(\kappa_3)^{(32\times32)}_{\rm avg}=38.92,\quad (\kappa_3)^{(64\times64)}_{avg}=38.06.    
\end{align}

\begin{figure}[h!]
    \centering
    \subfigimg[width=0.7\linewidth]{(a)}{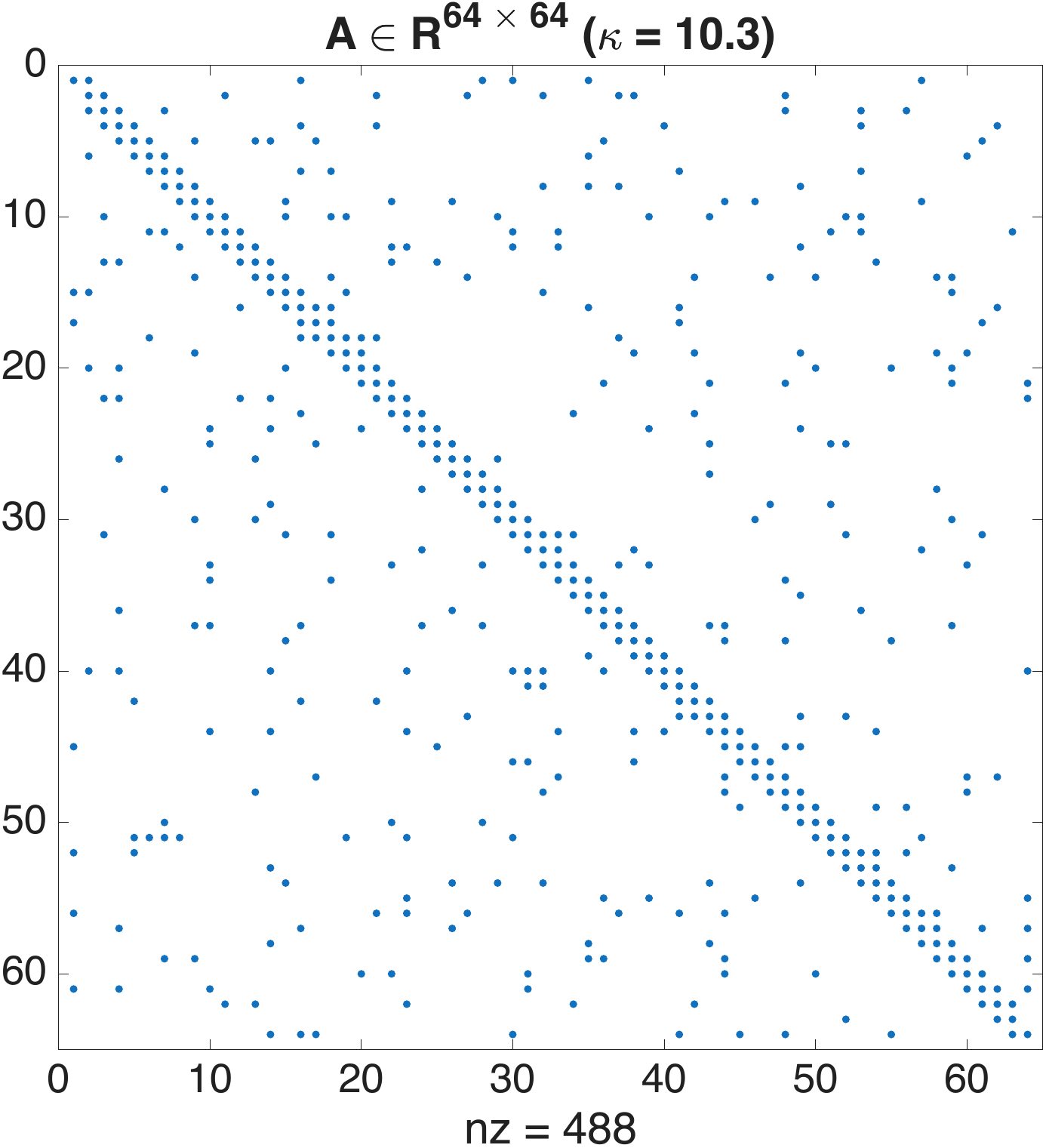} \\ [1em]
    \subfigimg[width=0.7\linewidth]{(b)}{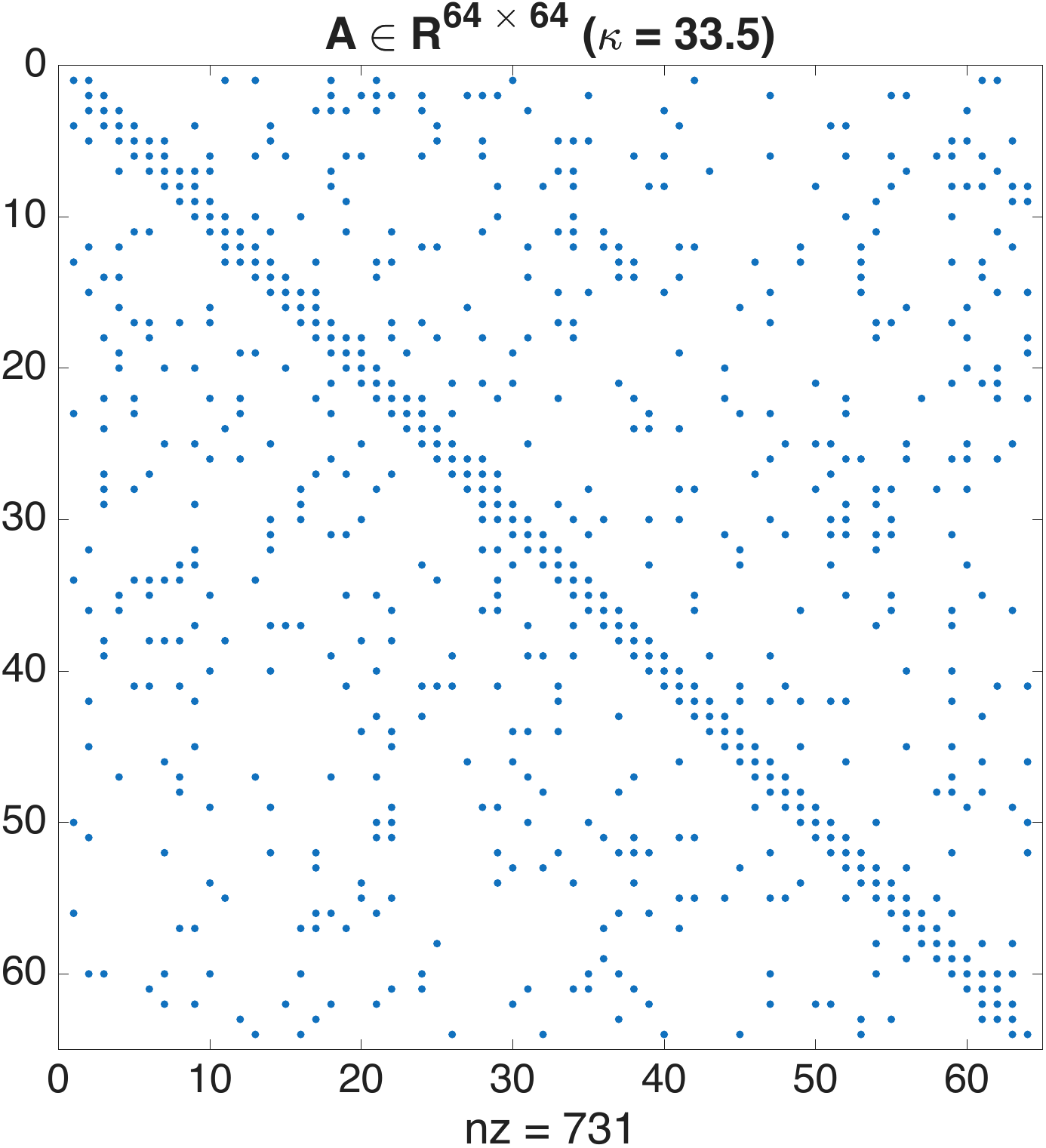} 
    \caption{Two representative $64\times 64$ matrices from our sparse solver testing: (a) a sample from group $k_1$ and (b) a sample from group $\kappa_2$. For each matrix, the top label reports its condition number, and the bottom label reports the total number of nonzeros.}
    \label{fig:sparsity}
\end{figure}

The goal of these tests is to evaluate whether \emph{sparsity}—together with the overheads of block-encoding access to sparse matrices—alters the overall cost beyond the predicted linear dependence on \(\kappa\) and on the target accuracy. The Average cost of the Shortcut method already includes the overhead of estimating the solution norm.

We can analyse these results by examining the cost ratio, $\rho = \text{Cost}_{\mathrm{QW}} / \text{Cost}_{\mathrm{SC}}$. Across the range of conditions considered, we have not observed a significant change for the Shortcut method as compared to what is given in \cref{sec:unk_regime} for the full dense regime. The same conclusion can be observed for the quantum walk method as we can see in \cref{sec:further}. If the cost of a quantum walk step were instead counted as two calls to $U_A$, as discussed in \cref{Sec:result}, the ratio $\rho$ would increase by a factor of 2.

\begin{table}[H]
\centering
\label{tab:performance_sparse_32_delta03}
\renewcommand{\arraystretch}{1.2}
\begin{tabular}{|c|c|c|c|c|c|}
\hline
\multicolumn{6}{|c|}{\textbf{Results for } $A \in \mathbb{R}^{32\times 32}$} \\
\hline
 & \multicolumn{2}{c|}{Shortcut method} & \multicolumn{2}{c|}{Quantum Walk} &{}\\
\hline
$\kappa$
& $\text{(Cost)}_{\mathrm{avg}}$ & $\text{(Error)}_{\mathrm{avg}}$
& $\text{Cost}$ & $\text{(Error)}_{\mathrm{avg}}$
& $\rho$ \\
\hline
$\kappa_1$ & $1.23\times 10^{2}$ & $0.295\pm 0.020$ & $9.14\times 10$ & $0.288\pm 0.001$ & $0.74$ \\
$\kappa_2$ & $2.12\times 10^{2}$ & $0.298\pm 0.011$ & $1.17\times 10^{2}$ & $0.289\pm 0.011$ & $0.55$ \\
$\kappa_3$ & $3.51\times 10^{2}$ & $0.298\pm 0.007$ & $1.96\times 10^{2}$ & $0.293\pm 0.001$ & $0.56$ \\
\hline
\end{tabular}
\caption{Comparison between the Shortcut and Quantum Walk solvers over 100 non-Hermitian sparse instances of size \(32\times 32\) at target error \(\Delta=0.3\). The last column reports the cost ratio \(\rho=\text{Cost}_{\mathrm{QW}}/\text{Cost}_{\mathrm{SC}}\).}
\end{table}

\begin{table}[H]
\centering
\label{tab:performance_sparse_64_delta03}
\renewcommand{\arraystretch}{1.2}
\begin{tabular}{|c|c|c|c|c|c|}
\hline
\multicolumn{6}{|c|}{\textbf{Results for } $A \in \mathbb{R}^{64\times 64}$} \\
\hline
 & \multicolumn{2}{c|}{Shortcut method} & \multicolumn{2}{c|}{Quantum Walk} & {} \\
\hline
$\kappa$
& $\text{(Cost)}_{\mathrm{avg}}$ & $\text{(Error)}_{\mathrm{avg}}$
& $\text{Cost}$ & $\text{(Error)}_{\mathrm{avg}}$
& $\rho$ \\
\hline
$\kappa_1$ & $8.18\times 10$ & $0.300\pm 0.016$ & $4.47\times 10$ & $0.279\pm 0.014$ & $0.55$ \\
$\kappa_2$ & $2.14\times 10^2$ & $0.300\pm 0.010$ & $1.20\times 10^2$ & $0.293\pm 0.006$ & $0.56$ \\
$\kappa_3$ & $3.26 \times 10^2$ & $0.283\pm 0.013$ & $1.64\times 10^2$ & $0.298\pm 0.010$ & $0.50$ \\
\hline
\end{tabular}
\caption{Comparison between the Shortcut and Quantum Walk solvers over 100 non-Hermitian sparse instances of size \(64\times 64\) at target error \(\Delta=0.3\).}
\end{table}

\section{Conclusion}\label{sec:conc}

We tested the Shortcut method~\cite{dalzell2024shortcut} through a large-scale constant-factor benchmarking study, and compared its performance to the discrete-adiabatic quantum-walk solver~\cite{costa2023discrete}.
As compared to the prior benchmarking study Ref.~\cite{costa2023discrete} (which did not consider the Shortcut method), we extended our empirical analysis to larger dimensions, larger condition numbers, and additional matrix families.

Our results confirm that the Shortcut method can provide a clear constant-factor advantage in the idealised regime where $\|\mathbf{x}\|$ is known \emph{a priori} for non-Hermitian matrices. In contrast, for positive-definite systems, the QW method almost matches with the Shortcut performance, consistent with the most favourable constant factors of the QW in this setting. 

In the more realistic unknown-norm regime, however, the overhead associated with norm estimation and filtering renders the Shortcut method significantly more costly than the QW method. As a result, in the large-condition-number and dimensions regime considered here, the QW method shows a clear advantage, making the Shortcut method less competitive overall. Note that this conclusion applies to the simplest variant of the Shortcut method.  It remains possible that optimized implementations of the more complex variants could lead to slightly better performance.

In summary, these findings demonstrate that the relative performance of optimal QLSAs depends sensitively on matrix structure (non-Hermitian vs.\ PD), condition number and the availability of norm information. They provide practical guidance on which QLSAs to choose in various scenarios to provide the best performance.

\subsection*{Acknowledgements}
The authors thank Mauro E. S. Morales for helpful discussions. 
DWB worked on this project under a sponsored research agreement with Google Quantum AI.
This project is supported by Australian Research Council Discovery Projects DP220101602 and DP260102543.
DA acknowledges funding from Quantum Science and Technology - National Science and Technology Major Project via Project 2024ZD0301900, and the support by The Fundamental Research Funds for the Central Universities, Peking University.

\appendix
\counterwithin{figure}{section}
\counterwithin{table}{section}
\renewcommand{\thefigure}{\thesection.\Roman{figure}}
\renewcommand{\thetable}{\thesection.\Roman{table}}

\section{The kernel reflection and the kernel projection}\label{sec:kernels}

The algorithm that we will be focusing on here is based on the kernel projection (KP) and kernel reflection (KR). The KP is the technique of eigenstate filtering, generalized to non-Hermitian matrices, and it leads to approximate projection onto the kernel of a matrix. Now, the KR leads to approximate reflection about the kernel. 

In either case, the first step is to construct a matrix $G \in \mathbb{C}^{m\times m}$ out of $A$ and $\mathbf{b}$, namely
\begin{equation}
G = Q_{\mathbf{b}}A,    
\end{equation}
where $Q_b=I_m - \mathbf{b}\mathbf{b}^{\dagger}$ is the projector onto the subspace of $\mathbf{b}$, which has $\mathbf{x}$ as in the kernel,
\begin{equation}
G\mathbf{x} =  Q_{\mathbf{b}}A\mathbf{x} = Q_{\mathbf{b}} \mathbf{b} = \mathbf{b} - \mathbf{b} \|\mathbf{b}\| = 0,   
\end{equation}
since $\|\mathbf{b}\|=1$. We then need to have the block-encoding of $G$, $U_G$ which has a quite simple construction from the block-encoding of $A$ and the unitary for $U_\mathbf{b}$, as we can see in \cref{fig:UG}, plus an extra operation given by
\begin{equation}
\label{eq:CpNot}
C_{\Pi_0}\text{NOT} = \Pi_{0}\otimes X + (I_{2^s} - \Pi_0)\otimes I_2,    
\end{equation}
where $\Pi_{0}=\ketbra{e_0}$ and $I_{2^{s}} = \sum_{j=0}^{2^s - 1}\ketbra{e_j}$.

\begin{figure}[!ht]
\centerline{
\Qcircuit @R=2em @C=1em {
&\qw& \qw &\multigate{2}{U_G} & \qw &  \\
& {/}^a\qw& \qw & \ghost{U_G}  &\qw & = \\
 & {/}^s\qw& \qw & \ghost{U_G}  &\qw &  
}
\quad\Qcircuit @R=1.6em @C=1em {
 & \qw &\qw & \qw & \targ{} &\qw& \qw\\
  & \qw & \multigate{1}{U_A} & \qw & \qw &\qw&\qw \\
 &\qw & \ghost{U_A} & \gate{U_\mathbf{b}^{\dagger}}& \gate{\ketbra{e_0}}\qwx[-2]& \gate{U_\mathbf{b}}&\qw
}}
\caption{\label{fig:UG}  Quantum circuit implementing the block-enconding of $G$.}
\end{figure}
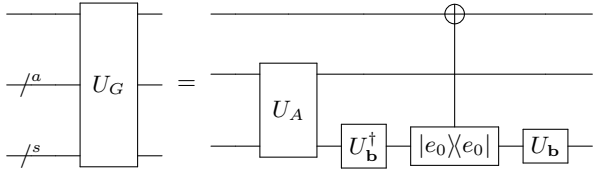

We can easily verify that in the third register we will have exactly the action of $G$, i.e., suppose we start with the following state $\ket{0}\ket{0^a}\ket{\psi}$, where the application of $U_A$, gives
\begin{equation}
I_2 \otimes U_A \ket{0}\ket{0^a}\ket{\psi} = \ket{0}\ket{0^a}A\ket{\psi} + \ket{\phi_{\perp}}.
\end{equation}
Now from \cref{eq:CpNot}, combined with the unitaries $U^{\dagger}_{\mathbf{b}}$ and $U_{\mathbf{b}}$ on the last register, i.e.,
\begin{equation}
 I_2\otimes I_{2^a}\otimes\left(I_{2^s} - \ketbra{\mathbf{b}}\right) + X \otimes I_{2^a}\otimes \ketbra{\mathbf{b}},
\end{equation}
then we finish with
\begin{equation}
\ket{0}\ket{0^a}Q_{\mathbf{b}}A\ket{\psi} + \ket{\phi'_{\perp}} = \ket{0}\ket{0^a}G\ket{\psi} + \ket{\phi'_{\perp}},
\end{equation}
as we wanted.

We can now verify that the smallest nonzero eigenvalue of $G$ is at least $\kappa^{-1}$. Suppose that from the SVD of $G$, i.e., $G=U\Sigma V$, we have $\delta$ a nonzero singular value of $G$, where $G\mathbf{v}=\delta\mathbf{u}$. We then have  $\mathbf{x}$ orthogonal to $\mathbf{v}$, since
\begin{equation}
\delta^*(\mathbf{v}^{\dagger}\mathbf{x}) = \delta^*\mathbf{u}^{\dagger}G\mathbf{x}=0. 
\end{equation}
Since $\delta$ is nonzero, we have $\mathbf{v}$ orthogonal to $\mathbf{x}$. Again, from the definition of the singular vectors
\begin{equation}
\delta\mathbf{u} =  G\mathbf{v} = Q_{\mathbf{b}}A\mathbf{v} = A\mathbf{v} - (\mathbf{b}^{\dagger}A\mathbf{v})\mathbf{b} = A\left(\mathbf{v} - (\mathbf{b}^{\dagger}A\mathbf{v})\mathbf{x}\right).
\end{equation}
The result above shows that $\mathbf{u}$ is in the image of $A$ and that $Q_\mathbf{b}\mathbf{u}=\mathbf{u}$, since $Q_{\mathbf{b}}A\left(\mathbf{v} -(\mathbf{b}^{\dagger}A\mathbf{v})\mathbf{x}\right) = \delta \mathbf{u}$. Finally, we can see from the definition of the singular values 
\begin{equation}
\delta^2 = \mathbf{u}^{\dagger}GG^{\dagger}\mathbf{u} =  \mathbf{u}^{\dagger}Q_{\mathbf{b}}AA^{\dagger}Q_{\mathbf{b}}\mathbf{u} = \mathbf{u}^{\dagger}AA^{\dagger}\mathbf{u} \leq 1/\kappa^2, 
\end{equation}
where the inequality above results from the fact that eigenvalues of $A$ lie in the interval $[\kappa^{-1},1]$.

\subsection{Kernel Projection}
The KP takes two parameters $(\eta_{\rm KP}, \kappa)$ and only works under the promise that the nonzero singular values of $G$ lie in the interval $[\kappa^{-1},1]$ which we already showed that is the case. Let us consider an arbitrary normalized $s$-qubit state $\ket{\phi}=\gamma\ket{\mathbf{w}} + \nu\ket{\mathbf{w}_\perp}$, where $\mathbf{w}$ is a unit vector in the kernel of $G$, which typically for us it will be $\ket{\mathbf{w}}=\ket{\mathbf{x}}$, and $\mathbf{w}_{\perp}$ a unit vector orthogonal to the kernel of $G$. The action of KP is to decrease the components of the state $\ket{\phi}$ which do not belong to the kernel of $G$, i.e., under its action, we have
\begin{equation}
\label{eq:KP_ex1}
\gamma\ket{\mathbf{w}} + \nu\ket{\mathbf{w}_\perp} \rightarrow \gamma\ket{\mathbf{w}} + \nu\delta_1\ket{\mathbf{w}_\perp} + \nu\delta_2\ket{\mathbf{w'}_\perp},
\end{equation}
where $\delta_1$ and $\delta_2$ are real parameters  that satisfy $\sqrt{\delta_1^2+\delta_2^2}\leq \eta_{\rm KP}$ and $\mathbf{w}'_{\perp}$ another unit vector orthogonal to the kernel of $G$. We have the probability that KP succeeds by its squared-norm $|\gamma|^2 + |\nu|^2(\delta_1^2+\delta_2^2)$.

The KP here is the same as the eigenstate filtering from  \cite{lin2020optimal}, which is then constructed through the QSVT. The basic idea is that we can build a polynomial filter function that projects onto the kernel of the matrix. From \cite{lin2020optimal}, the filter function, considering the target parameters $\widetilde{\Delta}\in(0,1)$ and $\eta_{\rm KP}\in(0,1]$   is given by the following degree-$2l$ polynomial
\begin{equation}
F_{\widetilde{\Delta},l} = \frac{T_l\left(\frac{1+\widetilde{\Delta}^2 -2x^2}{1-\widetilde{\Delta}^2}\right)}{T_l\left(\frac{1+\widetilde{\Delta}^2}{1-\widetilde{\Delta}^2}\right)},   
\end{equation}
where $T_l$ is the $l$th Chebyshev polynomial of the first kind, where its order depends on the target parameters, i.e.,
\begin{equation}
 l \leq \biggl \lceil \frac{1}{2\widetilde{\Delta}} \ln\left(\frac{2}{\eta_{\rm KP}}\right)\biggl\rceil.  
\end{equation}
We have from the Lemma 1 of \cite{dalzell2024shortcut} which is a Lemma improved from \cite{lin2020optimal} that the polynomial $F_{\widetilde{\Delta},l}$ is guaranteed to satisfy the following three properties
\begin{enumerate}
    \item For all $x \in [-1,1]$, it holds that $|F_{\widetilde{\Delta},l}(x)|\leq 1$.
    \item For all $x \in [\widetilde{\Delta},1]$, it holds that $|F_{\widetilde{\Delta},l}(x)|\leq T_l\left(\frac{1+\widetilde{\Delta}^2}{1-\widetilde{\Delta}^2}\right)^{-1}\leq \eta_{\rm KP}$.
    \item $F_{\widetilde{\Delta},l}(0)=1$.
\end{enumerate}
The properties outlined above indicate that $F_{\widetilde{\Delta},l}(x)$ functions as a filter with width $\widetilde{\Delta}$, mapping the zero input to one and assigning values close to zero for all inputs outside the interval $[-\widetilde{\Delta}, \widetilde{\Delta}]$. The idea is to apply the QSVT using the polynomial $F_{\widetilde{\Delta},t}$.

Following the example provided by \cite{dalzell2024shortcut}, we consider a more detailed explanation of what is presented in \cref{eq:KP_ex1}, including the QSVT framework. Consider the state to which we apply the KP is given by $\ket{\phi} = c_1 \ket{\mathbf{a}} + c_2 \ket{\mathbf{b}}$ where $\mathbf{a}$ is a vector in the kernel of $H$, and $\mathbf{b}$ is orthogonal to the kernel. We also assume that the vector $\ket{\mathbf{b}}$ has the following decomposition $\sum_jw_j\ket{\mathbf{u}_j}$ into right singular vectors $\ket{\mathbf{u}_j}$ of $H$ all of which have singular values $\gamma_j \in (\widetilde{\Delta},1)$. Thus, the second property of the filter, we know that of $F_{\widetilde{\Delta},l}(\gamma_j)$ lies in the interval $[-\eta_{\rm KP},\eta_{\rm KP}]$. 

In order to look at the action of the polynomial  $F_{\widetilde{\Delta},l}$ of the matrix $H$, which has the following singular value transformation (SVD) $H=W\Sigma V^{\dagger}$ where $\delta$ is the diagonal matrix of singular values, let us look the overall picture of its QSVT construction. The key result is obtained from the measurement on the state $\ket{0}$ of the ancillas, expressed as:
\begin{align}
&VF_{\widetilde{\Delta},l}(H)V^{\dagger} \nn
&= \left(\bra{0}^{\otimes(a_H+1)}\otimes I_{2^s}\right)U_{F_{\widetilde{\Delta},l}}^{(H)} \left(\ket{0}^{\otimes(a_H+1)}\otimes I_{2^s}\right),      
\end{align}
where the unitary $U_{F_{\widetilde{\Delta},l}}^{(H)}$ represents the unitary produced by the QSVT of $U_H$. The operator $U_H$ is an $(1,a_H)$-block-encoding of $H$ and its implementation involves:

\begin{itemize}
    \item $\frac{d}{2}$ calls to each of $U_H$ and $U_H^{\dagger}$,
    \item $2d$ multi-controlled Toffoli gates,
    \item $d$ single-qubit rotation gates.
\end{itemize}
Thus, the application of $VF_{\widetilde{\Delta},l}(H)V^{\dagger}$ maps
\begin{equation}
\label{eq:KP_b}
\ket{\mathbf{b}} \rightarrow \sum_{j}w_jF_{\widetilde{\Delta},l}(\gamma_j)\ket{\mathbf{u}_j} = \delta_1\ket{\mathbf{b}} + \delta_2\ket{\mathbf{b}'},    
\end{equation}
where $\mathbf{b}'$ is orthogonal to $\mathbf{b}$ and to the kernel of $H$, and $\delta^2_1 + \delta^2_2\leq \eta_{\rm KP}^2$. This allows us to assert that
\begin{align}
\label{eq:Ukproj}
U_{F_{\widetilde{\Delta},l}}^{(H)}\left(\ket{0}^{\otimes(a_H+1)}\otimes\ket{\phi}\right) &= \biggl[\ket{0}^{\otimes(a_H+1)}\otimes\left(c_1\ket{\mathbf{a}}\right. \nonumber\\
&\quad \left.+ c_2\delta_1\ket{\mathbf{b}} +c_2\delta_2 \ket{\mathbf{b}'}\right) \biggl] \nonumber\\
&\quad+ c_2\sqrt{1-\delta_1^2 -\delta_2^2}\ket{\perp},    \end{align}
where $\ket{\perp}$ is a normalized state such that $\left(\bra{0}^{\otimes(a_H+1)}\otimes I_{2^s}\right)\ket{\perp} = 0$. Thus, post-selecting on the first register being $\ket{0}^{\otimes(a_H+1)}$, this implements an approximate projector onto the kernel of $H$.

We will finish this section by restating Lemma 2 from \cite{dalzell2024shortcut} following its proof.

\begin{lemma}[Using KP to refine QLSS solution] Suppose $\mathbf{b}$ is in the column space of $A$, and let $\mathbf{x}$ denote the solution of minimum $\|\mathbf{x}\|$ to the equation $Ax = b$. Suppose $A$ has no singular values in the interval $\left(0,\kappa^{-1}\right)$ and let $\tilde{\rho}$ be a mixed quantum state for which $\bra{\mathbf{x}}\tilde{\rho}\ket{\mathbf{x}} = 1-\mu^2$ and $\bra{\mathbf{y}}\tilde{\rho}\ket{\mathbf{y}}$ that are in the kernel of $A$. Suppose KP is applied to approximately project $\tilde{\rho}$ onto the kernel of $G =Q_bA$ using parameters $(\kappa, \eta_{\rm KP})$. That is, we choose $l$ where 
\begin{equation}
\label{eq:order:pol}
 l \leq \biggl \lceil \frac{1}{2\widetilde{\Delta}} \ln\left(\frac{2}{\eta_{\rm KP}}\right)\biggl\rceil,     
\end{equation}
with $\widetilde{\Delta}= \kappa^{-1}$ and apply the unitary $U_{F_{\widetilde{\Delta},l}}^{(G)}$, and then measure the ancillas to determine success. KP succeeds with probability at least $1-\mu^2$, and when it succeeds it outputs a state $\tilde{\rho}_\mathrm{out}$  for which
\begin{align}
 \bra{\mathbf{x}}\tilde{\rho}_\mathrm{out}\ket{\mathbf{x}}&\geq 1 -\frac{\mu^2\eta_{\rm KP}^2}{1-\mu^2 +\mu^2\eta_{\rm KP}^2} \geq 1 - \frac{\mu^2\eta_{\rm KP}^2}{1-\mu^2}\\
 \frac{1}{2}\|\ketbra{\mathbf{x}}{\mathbf{x}} - \tilde{\rho}_\mathrm{out} \|_1 &\leq \frac{\mu \eta_{\rm KP}}{\sqrt{1-\mu^2+\mu^2\eta_{\rm KP}^2}} \leq \frac{\mu \eta_{\rm KP}}{\sqrt{1-\mu^2}}.
\end{align}
\end{lemma}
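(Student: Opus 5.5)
We will work directly with the properties of the filter $F_{\widetilde{\Delta},l}$ established above and the block-encoding identity for $U^{(G)}_{F_{\widetilde{\Delta},l}}$. The first step is to pin down the spectral facts about $G=Q_{\mathbf{b}}A$ that are needed. The kernel of $G$ is spanned by $\mathbf{x}$ together with $\ker(A)$, and every nonzero singular value of $G$ lies in $[\kappa^{-1},1]$. The argument given earlier in this appendix shows exactly this. Its inequality sign should read $\geq 1/\kappa^2$, because $\mathbf{u}$ lies in the image of $A$, where $AA^\dagger$ has eigenvalues in $[\kappa^{-2},1]$.

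With $\widetilde{\Delta}=\kappa^{-1}$ and $l$ as in \cref{eq:order:pol}, properties 2 and 3 of the filter then give the following. The map $F_{\widetilde{\Delta},l}(G)$ acts as the identity on $\ker(G)$. It maps the orthogonal complement $\ker(G)^\perp$ into itself, with operator norm at most $\eta_{\rm KP}$ there.

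Next we treat a pure state $\ket{\phi}=\gamma\ket{\mathbf{x}}+\nu\ket{\mathbf{w}_\perp}$, then extend to mixtures. Here we use the hypothesis that $\tilde\rho$ has no weight on $\ker(A)$, which is how we read the garbled clause. With that hypothesis, each pure component of $\tilde\rho$ splits into a part along $\mathbf{x}$ and a part in $\ker(G)^\perp$. By \cref{eq:Ukproj}, the unnormalised post-selected state is $\gamma\ket{\mathbf{x}}+\nu\ket{\mathbf{w}'}$, with $\ket{\mathbf{w}'}\in\ker(G)^\perp$, $\ket{\mathbf{w}'}\perp\ket{\mathbf{x}}$, and $\|\mathbf{w}'\|\le\eta_{\rm KP}$.

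Writing $\tilde\rho=\sum_k p_k\ketbra{\phi_k}$, the success probability is
\[
\sum_k p_k\left(|\gamma_k|^2+|\nu_k|^2\|\mathbf{w}'_k\|^2\right).
\]
Since $\sum_k p_k|\gamma_k|^2=1-\mu^2$, this is at least $1-\mu^2$ and at most $1-\mu^2+\mu^2\eta_{\rm KP}^2$. The unnormalised overlap with $\ket{\mathbf{x}}$ equals exactly $1-\mu^2$, because $\mathbf{x}$ is orthogonal to every $\mathbf{w}'_k$. Dividing by the success probability gives
\[
\bra{\mathbf{x}}\tilde\rho_{\rm out}\ket{\mathbf{x}}\geq\frac{1-\mu^2}{1-\mu^2+\mu^2\eta_{\rm KP}^2}=1-\frac{\mu^2\eta_{\rm KP}^2}{1-\mu^2+\mu^2\eta_{\rm KP}^2}.
\]
This expression is monotone in the success probability, so the worst case is the upper bound. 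The weaker bound with denominator $1-\mu^2$ follows immediately.

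For the trace-distance statement, we use the standard fact that for a pure target state $\frac12\|\ketbra{\mathbf{x}}{\mathbf{x}}-\sigma\|_1\le\sqrt{1-\bra{\mathbf{x}}\sigma\ket{\mathbf{x}}}$. This is the Fuchs–van de Graaf inequality, which is tight for pure-versus-mixed fidelity. Taking the square root of the infidelity bound gives $\mu\eta_{\rm KP}/\sqrt{1-\mu^2+\mu^2\eta_{\rm KP}^2}$.

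The main obstacle is the bookkeeping in the mixed-state step. Cross terms between different purifications can appear, and the minimum-norm solution requires care when $A$ has a kernel. We handle both by working entirely in terms of the operator $P=\Pi_0 U\Pi_0$ restricted to the system register, where $\Pi_0$ is the projector onto the all-zero ancilla state. We write $P=\ketbra{\mathbf{x}}+R$ with $R=R\,\Pi_{\ker(G)^\perp}$ and $\|R\|\le\eta_{\rm KP}$. Then $\mathrm{tr}(P\tilde\rho P^\dagger)$ and $\bra{\mathbf{x}}P\tilde\rho P^\dagger\ket{\mathbf{x}}$ can be computed directly. The cross terms vanish because $\bra{\mathbf{x}}R=0$, and this holds because $R$ maps into $\ker(G)^\perp$.
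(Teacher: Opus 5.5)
Your proposal is correct and follows essentially the same route as the paper's proof: decompose $\tilde\rho$ into pure states split along $\ket{\mathbf{x}}$ and $\ker(G)^\perp$, note that the post-selected weight on $\ket{\mathbf{x}}$ is exactly $1-\mu^2$ while the success probability lies between $1-\mu^2$ and $1-\mu^2+\mu^2\eta_{\rm KP}^2$, divide, and finish with $\frac12\|\ketbra{\mathbf{x}}{\mathbf{x}}-\sigma\|_1\le\sqrt{1-\bra{\mathbf{x}}\sigma\ket{\mathbf{x}}}$. Your sign correction to $\geq 1/\kappa^2$ in the singular-value argument is right, and your operator-level bookkeeping is a harmless decomposition-free repackaging of the same computation; strictly one has $P=\ketbra{\mathbf{x}}{\mathbf{x}}+\Pi_{\ker A}+R$, but the $\Pi_{\ker A}$ term annihilates $\tilde\rho$ under your reading of the hypothesis.
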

\begin{proof}
Let $\tilde{\rho} = \sum_{i}p_i\ketbra{\phi_i}{\phi_i}$ be a decomposition of $\tilde{\rho}$ as an ensemble of pure states. Then we have
\begin{equation}
\bra{\mathbf{x}}\tilde{\rho}\ket{\mathbf{x}} = \sum_ip_i|\bra{\mathbf{x}}\phi_i\rangle|^2 =1-\mu^2.   
\end{equation}
Let us consider $\ket{\mathbf{a}} = \ket{\mathbf{x}}$ in \cref{eq:Ukproj} and 
\begin{equation}
 \ket{\phi_i} = c_1\ket{\mathbf{x}} + c_2 \ket{\mathbf{b}},   
\end{equation}
and consenquently $c_1=|\bra{\mathbf{x}}\phi_i\rangle|$. Let $\ket{\psi_i}$ denote the output state when KP acts on $\ket{\phi_i}$ and we postselect on success, i.e., we have 
\begin{equation}
\ket{\psi_i} = \frac{1}{\sqrt{q_i}}\left(c_1\ket{\mathbf{x}} + c_2\delta_1\ket{\mathbf{b}} +c_2\delta_2 \ket{\mathbf{b}'}\right),
\end{equation}
and and let $q_i$ be the probability of success, i.e., $q_i= |c_1|^2 + |c_2\delta_1|^2 + |c_2\delta_2|^2$ to get the right state before flag on zero. Thus, $|\bra{\mathbf{x}}\psi_i\rangle|^2=|c_1|^2/q_i$. We then have the overall probability of success of KP given by
\begin{equation}
 \sum_j p_j q_j  \geq  \sum_j p_j |\bra{\mathbf{x}}\phi_i\rangle|^2 = 1 -\mu^2,
\end{equation}
which shows the first claim of this lemma. Now, from the expression of $q_i$ we know that $|\delta_1|^2 + |\delta_2|^2 \leq \eta_{\rm KP}^2$, so
\begin{align}
q_i&= |c_1|^2 + |c_2\delta_1|^2 + |c_2\delta_2|^2 \nonumber\\
&= |\bra{\mathbf{x}}\phi_i\rangle|^2 +|c_2|^2\left(|\delta_1|^2 + |\delta_2|^2\right)\nonumber\\
&=|\bra{\mathbf{x}}\phi_i\rangle|^2 +(1-|\bra{\mathbf{x}}\phi_i\rangle|^2)\left(|\delta_1|^2 + |\delta_2|^2\right)\nonumber\\
&\leq |\bra{\mathbf{x}}\phi_i\rangle|^2 +(1-|\bra{\mathbf{x}}\phi_i\rangle|^2)\eta_{\rm KP}^2. 
\end{align}
Given that the output state $\tilde{\rho}$ is 
\begin{equation}
 \tilde{\rho}_\mathrm{out} = \frac{\sum_ip_iq_i \ketbra{\psi_i}{\psi_i}   }{\sum_j p_jq_j}, 
\end{equation}
which includes the normalization. We can now combine the previous results
and see that
\begin{align}
\bra{\mathbf{x}}\tilde{\rho}_\mathrm{out}\ket{\mathbf{x}} & =   \frac{\sum_ip_iq_i |\bra{\mathbf{x}}\psi_i\rangle|^2   }{\sum_j p_jq_j}\nonumber\\
&=\frac{\sum_ip_i |\bra{\mathbf{x}}\phi_i\rangle|^2   }{\sum_j p_jq_j}\nonumber\\
&=\frac{1-\mu^2}{\sum_j p_jq_j}\nonumber\\
&\geq \frac{1-\mu^2}{\sum_j p_j\left(|\bra{\mathbf{x}}\phi_i\rangle|^2 +(1-|\bra{\mathbf{x}}\phi_i\rangle|^2)\eta_{\rm KP}^2\right)} \nonumber\\
&= 1 - \frac{\mu^2 \eta_{\rm KP}^2}{1-\mu^2+\mu^2\eta_{\rm KP}^2}.
\end{align}
The  trace distance we apply the following inequality $1/2\|\ketbra{\mathbf{x}}{\mathbf{x}} - \sigma\| \leq \sqrt{1- \bra{\mathbf{x}}\sigma\ket{\mathbf{x}}}.$
\end{proof}

\subsection{Kernel Reflection}

The KR also takes the two input parameters $(\eta_{\rm KR}, \kappa)$. The $\eta_{\rm KR}$ parameter plays the same role as $\eta_{\rm KP}$ but may be chosen to take a different value. We choose $l$ as in \cref{eq:order:pol}, with $\eta_{\rm KR}$ in place of $\eta_{\rm KP}$.  KR maps the quantum state $\ket{\phi} = \gamma\ket{\mathbf{w}} + \nu\ket{\mathbf{w}_\perp}$ as follows
\begin{equation}
\label{eq:KR_ex1}
\gamma\ket{\mathbf{w}} + \nu\ket{\mathbf{w}_\perp} \rightarrow \gamma\ket{\mathbf{w}} - \nu(1-\delta'_1)\ket{\mathbf{w}_\perp} + \nu\delta'_2\ket{\mathbf{w'}_\perp},
\end{equation}
which are the same vectors given in \cref{eq:KP_ex1}, where 
\begin{equation}
\label{eq:delta'}
\delta_1'=\frac{2(\eta_{\rm KR}+\delta_1)}{1+\eta_{\rm KR}},\quad \delta_2' = \frac{2\delta_2}{1+\eta_{\rm KR}}  ,
\end{equation}
from where the following relations can be derived
\begin{equation}
\label{eq:etas}
\delta'_1 \geq 0,\quad \sqrt{\delta'^2_1 + \delta'^2_2} \leq \frac{4\eta_{\rm KR}}{1+\eta_{\rm KR}},\;\quad |\delta_2'|\leq \frac{2\eta_{\rm KR}}{1+\eta_{\rm KR}}.    
\end{equation}

Similarly to the KP we can build a $2l$-degree polynomial within the framework of QSVT for the KR, which is given by
\begin{align}
\label{eq:K_delt}
K_{\widetilde{\Delta},l}(x) &= \frac{2F_{\widetilde{\Delta},l}(x)-1 + F_{\widetilde{\Delta},l}}{1+F_{\widetilde{\Delta},l}(\widetilde{\Delta})}\nonumber\\
&= \frac{2T_l\left(\frac{1+\widetilde{\Delta}^2 -2x^2}{1-\widetilde{\Delta}^2} + 2\right)}{T_l \left(\frac{1+\widetilde{\Delta}^2}{1-\widetilde{\Delta}^2}\right) + 1} - 1.    
\end{align}

Now from the Lemma 3 of \cite{dalzell2024shortcut} the polynomial $K_{\widetilde{\Delta},l}$ is guaranteed to satisfy the following three properties
\begin{enumerate}
    \item For all $x \in [-1,1]$, it holds that 
    \begin{equation}
        |K_{\widetilde{\Delta},l}(x)|\leq 1.
    \end{equation}
    \item For all $x \in [\widetilde{\Delta},1]$, it holds that
    \begin{equation}
     |K_{\widetilde{\Delta},l}(x)|\leq -1 + \frac{4}{T_l\left(\frac{1+\widetilde{\Delta}^2}{1-\widetilde{\Delta}^2}\right)+1}\leq  -1 + \frac{4\eta_{\rm KR}}{1+\eta_{\rm KR}}.   
    \end{equation}

    \item $K_{\widetilde{\Delta},l}(0)=1$.
\end{enumerate}

Now let us do the same exercise that we have done for KP. We consider the state $\ket{\phi} = c_1 \ket{\mathbf{a}} + c_2 \ket{\mathbf{b}}$ where $\mathbf{a}$ is a vector in the kernel of $H$, and $\mathbf{b}$ is orthogonal to the kernel. We again assume that the vector $\ket{\mathbf{b}}$ has the following decomposition $\sum_jw_j \ket{\mathbf{u}_j}$. We now look at the action of the polynomial $K_{\widetilde{\Delta},l}$ of the matrix $H$, so in this case we have
\begin{align}
&VK_{\widetilde{\Delta},l}(H)V^{\dagger} \nn
&= \left(\bra{0}^{\otimes(a_H+1)}\otimes I_{2^s}\right)U_{K_{\widetilde{\Delta},l}}^{(H)} \left(\ket{0}^{\otimes(a_H+1)}\otimes I_{2^s}\right).      
\end{align}
From \cref{eq:KP_b}, we then can see that 
\begin{align}
\ket{\mathbf{b}} &\rightarrow \sum_{j}w_jK_{\widetilde{\Delta},l}(\gamma_j)\ket{\mathbf{u}_j} \nn
&= \frac{2F_{\widetilde{\Delta},l}(x)-1 + F_{\widetilde{\Delta},l}}{1+F_{\widetilde{\Delta},l}(\widetilde{\Delta})}\ket{\mathbf{u}_j}  \nonumber\\
&= -\left(1 - \frac{2F_{\widetilde{\Delta},l}(x) }{1+F_{\widetilde{\Delta},l}(\widetilde{\Delta})}\right)\ket{\mathbf{b}} \nonumber\\
&\quad+ \frac{2}{1+F_{\widetilde{\Delta},l}(\widetilde{\Delta})}\left(\delta_1\ket{\mathbf{b}} + \delta_2\ket{\mathbf{b'}} \right)\nonumber\\
&= -\left(1 - \delta_1'\right)\ket{\mathbf{b}} + \delta_2'\ket{\mathbf{b}'}, 
\end{align}
where
\begin{equation}
    \delta'_1 = \frac{2F_{\widetilde{\Delta},l}(\widetilde{\Delta})+2\delta_1}{1+F_{\widetilde{\Delta},l}(\widetilde{\Delta})},\quad \delta_2' = \frac{2\delta_2}{1+F_{\widetilde{\Delta},l}(\widetilde{\Delta})}.
\end{equation}

We also known from the properties of $F_{\widetilde{\Delta},t}(x)$ that for all $x\in [\widetilde{\Delta},1]$, $\eta_{\rm KR} \geq |F_{\widetilde{\Delta},l}(x)|$, so if we consider $|F_{\widetilde{\Delta},l}(\widetilde{\Delta})|=\eta_{\rm KR}$ we recover \cref{eq:delta'}. Finally, we can conclude
\begin{align}
\label{eq:Ukproj2}
U_{K_{\widetilde{\Delta},l}}^{(H)}\left(\ket{0}^{\otimes(a_H+1)}\otimes\ket{\phi}\right) &= \biggl[\ket{0}^{\otimes(a_H+1)}\otimes\left(c_1\ket{\mathbf{a}}\right.  \nonumber\\ 
&\left. \quad - c_2(1 - \delta_1')\ket{\mathbf{b}} + c_2\delta'_2 \ket{\mathbf{b}'}\right) \biggl]\nonumber\\ 
&\quad+ c_2\sqrt{2\delta'_1-\delta_1^{'2} -\delta_2^{'2}}\ket{\perp}.    
\end{align}

\subsection{Comparison to QW filtering and double cost}
\label{app:QWfilt}
There are a number of subtleties in comparing the cost of the QSVT filter for the KP in the Shortcut method versus the filter used for the QW in Ref.~\cite{costa2023discrete}.
For the KP of the Shortcut method, there is a factor of $1/2$ in Eq.~\eqref{eq:order:pol}, which is not present in Eq.~(113) of Ref.~\cite{costa2023discrete}.
However, the degree of the polynomial is $2l$ for the QSVT filter for the KP, which means that the number of calls to the block encoding is equivalent between the two methods.
Another subtlety is that the polynomial has order from $-\ell$ to $+\ell$ in Ref.~\cite{costa2023discrete}, but that does not increase the cost because one may control between the qubitised walk step and its inverse simply by controlling the reflection.

A further subtlety is that the block encoding of $G$ for the QSVT only uses a single call to the block encoding of $A$, whereas the block encoding of the Hamiltonian given in Fig.~7 of Ref.~\cite{costa2023discrete} uses one call to a controlled $A$ and another to a controlled $A^\dagger$; see Fig~\ref{fig:block}.
Those two controlled block encodings may be formulated as a single control between the block encodings of $A$ and $A^\dagger$.
In block encodings it is usually the same cost to control between a block encoding of $A$ and $A^\dagger$ as it is to just control the block encoding.
For example, if a phase factor is being applied, then it would correspond to controlling between an addition versus a subtraction into a phase gradient register, which does not increase the non-Clifford cost.

In some cases it may be appropriate to cost the control between $A$ and $A^\dagger$ as double the cost of simply controlling $A$, as also reported in this section.
In that case, the cost of the QW would be doubled, but the cost of the filter would not.
The reason is that the filter is for the final Hamiltonian, in which case no superposition is needed for the qubit $a_1$ that selects between the initial and final Hamiltonians.
As a result, the qubit $a_{h_1}$ that selects between $A$ and $A^\dagger$ is flipped every time this Hamiltonian is applied (there is an $X$ on this qubit with no control).
The quantum walk starts in a chosen computational basis state for this qubit, then each step of the walk deterministically flips this qubit.
This means that, although the block encoding has been shown with both a controlled $U_A$ and $U_A^\dagger$, in the implementation only one will be needed in each step.
Similarly, only one of the $U_b$, $U_b^\dagger$ pairs will be needed each time the Hamiltonian is block encoded.
As a result, the cost of the block encoding of the Hamiltonian is one call to the block encoding of $A$ or $A^\dagger$, as well as one $U_b$, $U_b^\dagger$ pair, and the cost is equivalent to that of the block encoding of $G$ for the QSVT.

There is one difference in the costing, which is that Ref.~\cite{costa2023discrete} uses an LCU approach to the filter enabling any failure cases to be detected early.
In comparison, for the QSVT the entire QSVT needs to be performed before failure can be detected.
This means that, in the case of failure, the LCU approach has half the cost on average.
To fairly compare the two approaches, we have assumed this saving when testing both the Shortcut and QW methods.
A further subtlety is that, although the filter can be written as $\epsilon T_\ell(\beta\cos(\phi))$ where the Chebyshev polynomial has exponentially large coefficients, when this polynomial is expanded into the basis of $\cos(m\phi)$, the coefficients are positive and sum to 1.
As a result the LCU approach is efficient, and the filter succeeds with probability 1 for the ground state.

We then finish this section reporting all the cases tested previously for the QW method (except the sparse-matrix results), namely for PD and Non-Hermitian matrices for $32\times 32$ and $64\times 64$, when we count as twice the number of application of $U_A$ in the adiabatic part for the gates of the form $\ket{0}\bra{0} \otimes U_A + \ket{1}\bra{1}\otimes U_{A^\dagger}$.

\begin{figure}
\includegraphics[width = 0.5\textwidth]{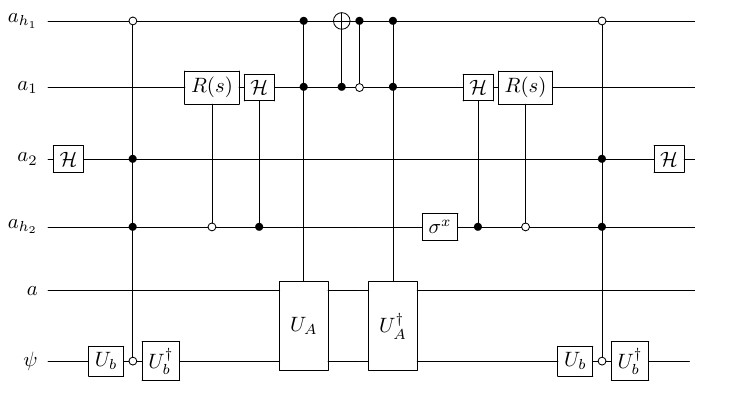}
\caption{\label{fig:block} The block encoding of the Hamiltonian from Ref.~\cite{costa2022optimal}.}
\end{figure}

\begin{figure}[H]
    \centering
\includegraphics[width=0.45\textwidth]{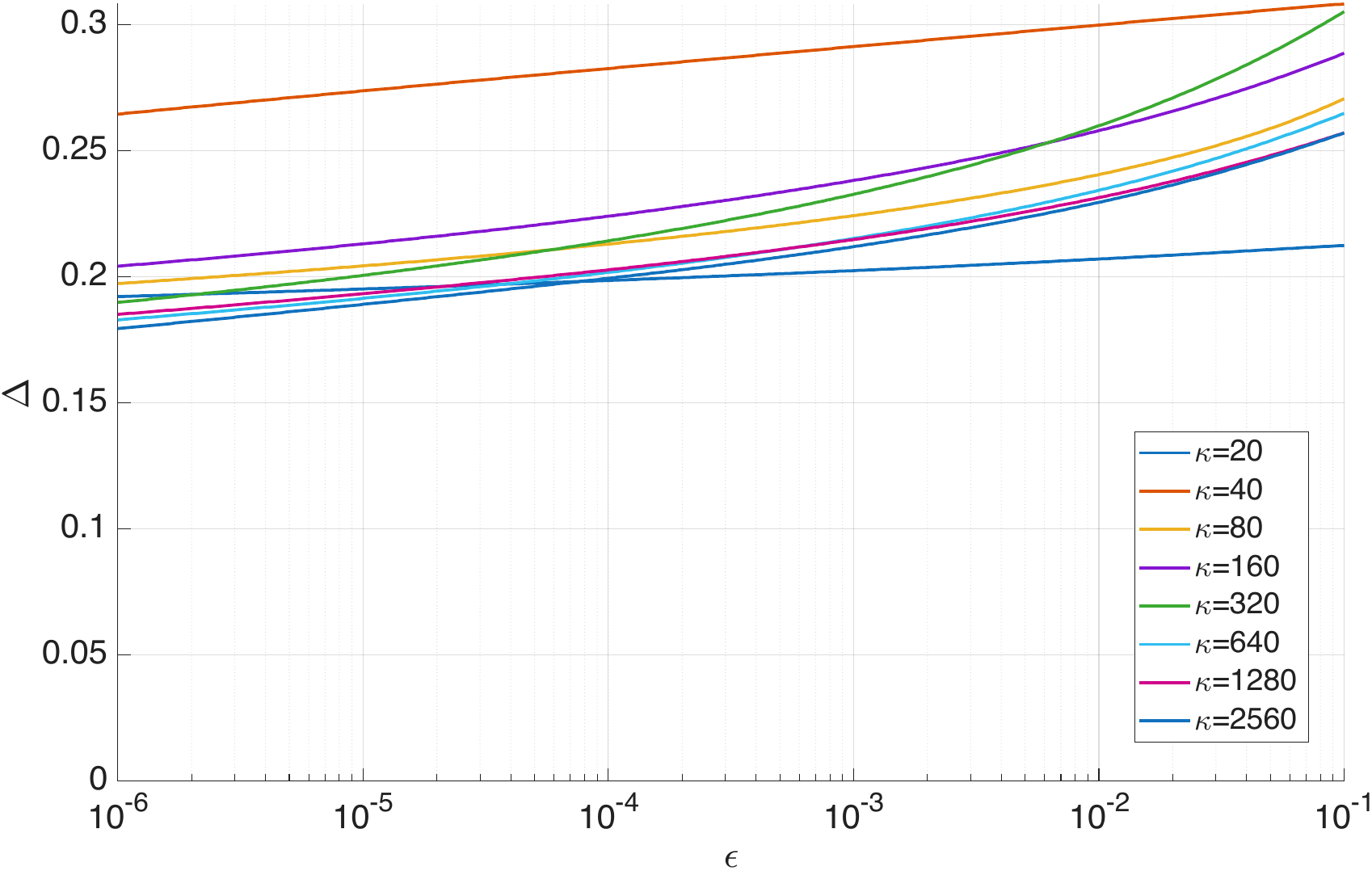}
    \caption{The plot shows the recommended values of $\Delta$, for the QW method, when we double the cost for controlling $A$, as a function of the tested condition number, for the set of positive definite matrices $A\in\mathbb{C}^{32\times 32}$.}
    \label{fig:Delta32double}
\end{figure}

\begin{figure}[H]
    \centering
\includegraphics[width=0.45\textwidth]{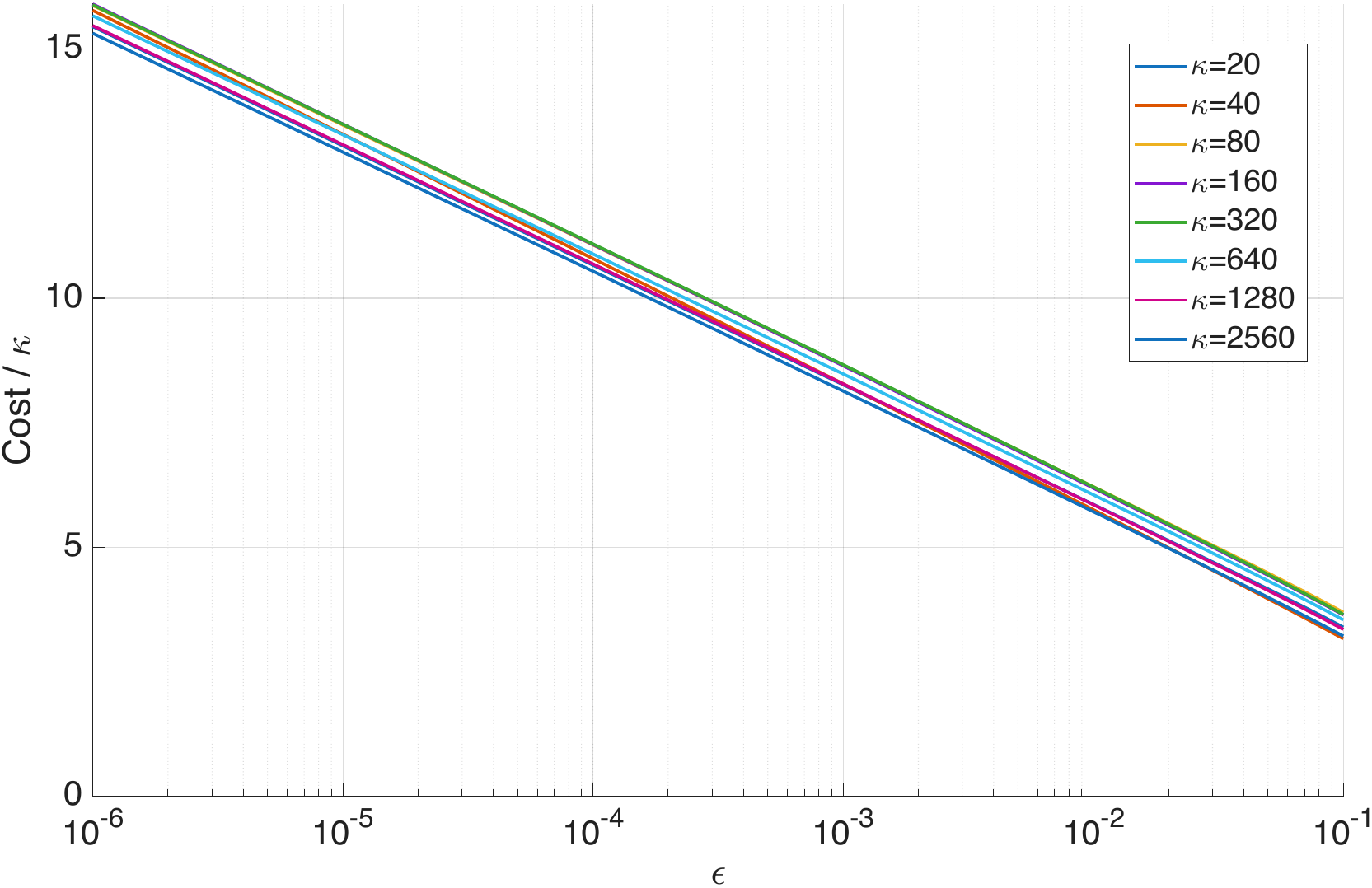}
    \caption{Total cost of the QW method for different target solution-error tolerances,  when we double the cost for controlling $A$ (in contrast for what is given in \cref{fig:Cost32}), using the optimal step sizes $\Delta$ (as estimated from \cref{fig:Delta32double}) for each tested condition number, over the set of positive definite matrices $A\in\mathbb{C}^{32\times 32}$.}
    \label{fig:Cost32double}
\end{figure}

\begin{figure}[H]
    \centering
\includegraphics[width=0.45\textwidth]{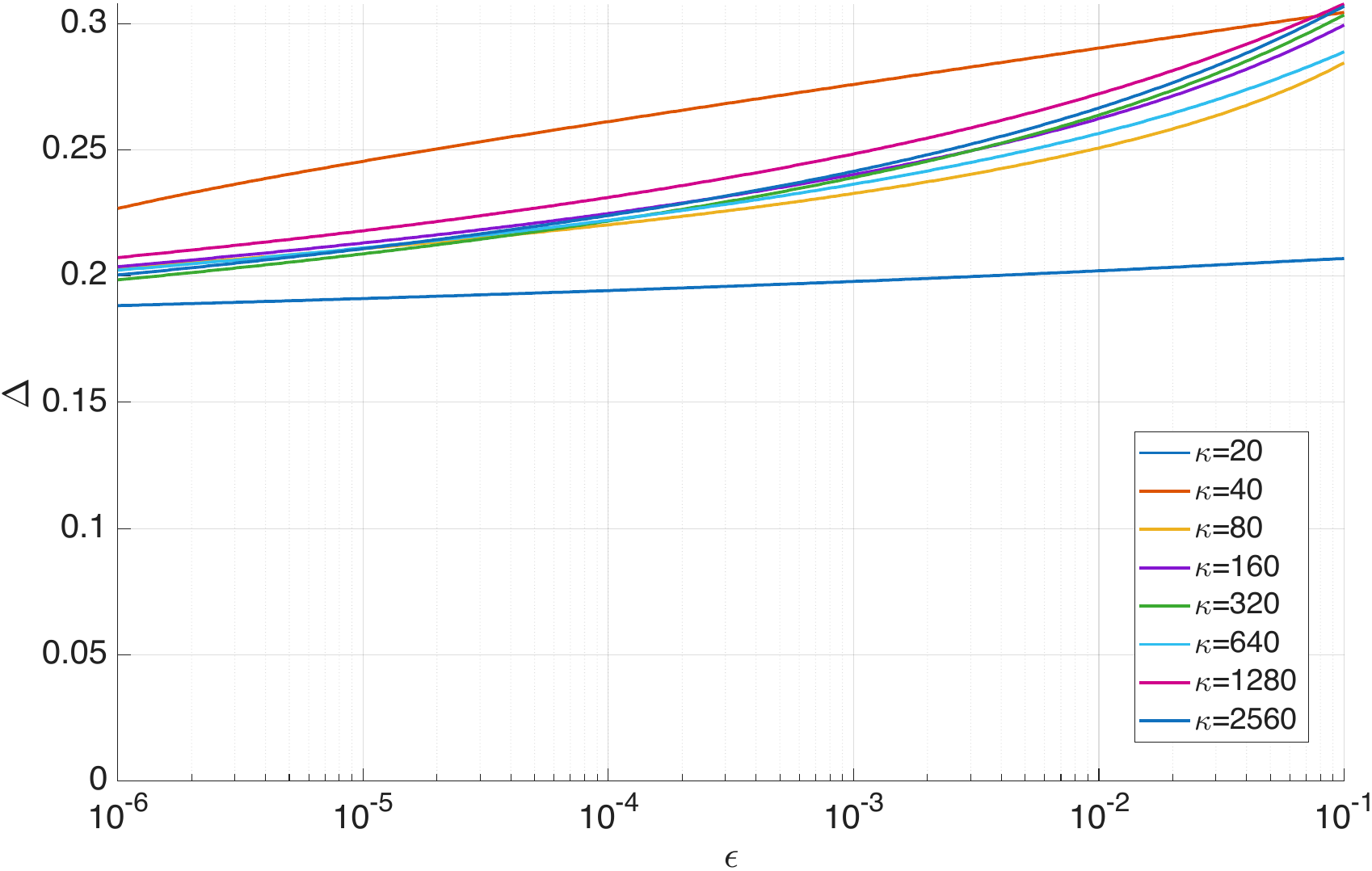}
    \caption{The plot shows the recommended values of $\Delta$ for the QW method, when we double the cost for controlling $A$, as a function of the tested condition number, for the set of positive definite matrices $A\in\mathbb{C}^{64\times 64}$.}
    \label{fig:Delta64double}
\end{figure}

\begin{figure}[H]
    \centering
\includegraphics[width=0.45\textwidth]{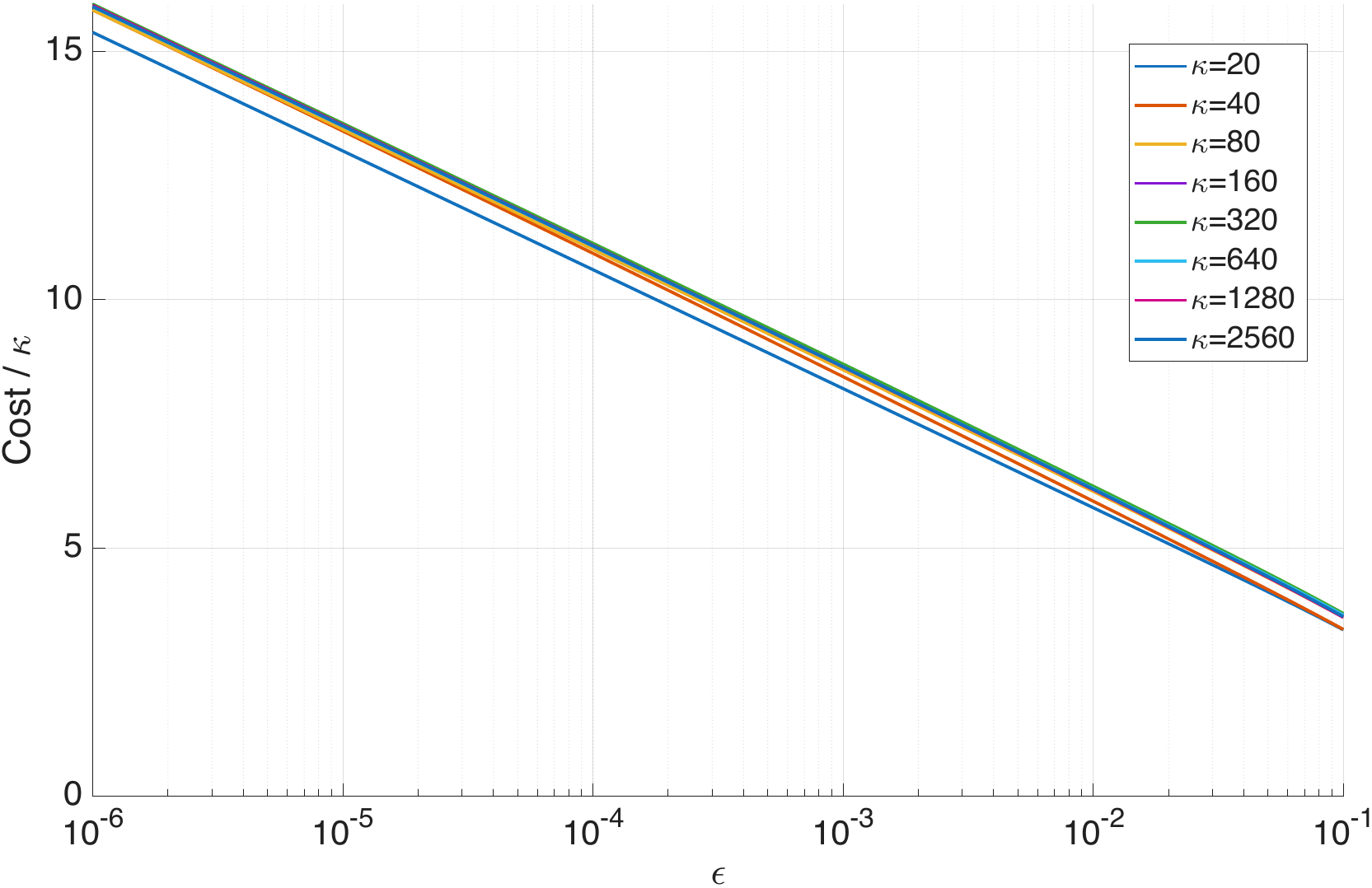}
    \caption{Total cost of the QW method for different target solution-error tolerances, when we double the cost for controlling $A$ (in contrast for what is given in \cref{fig:Cost64}), using the optimal step sizes $\Delta$ (as estimated from \cref{fig:Delta64double}) for each tested condition number, over the set of positive definite matrices $A\in\mathbb{C}^{64\times 64}$.}
    \label{fig:Cost64double}
\end{figure}

\begin{figure}[H]
    \centering
\includegraphics[width=0.45\textwidth]{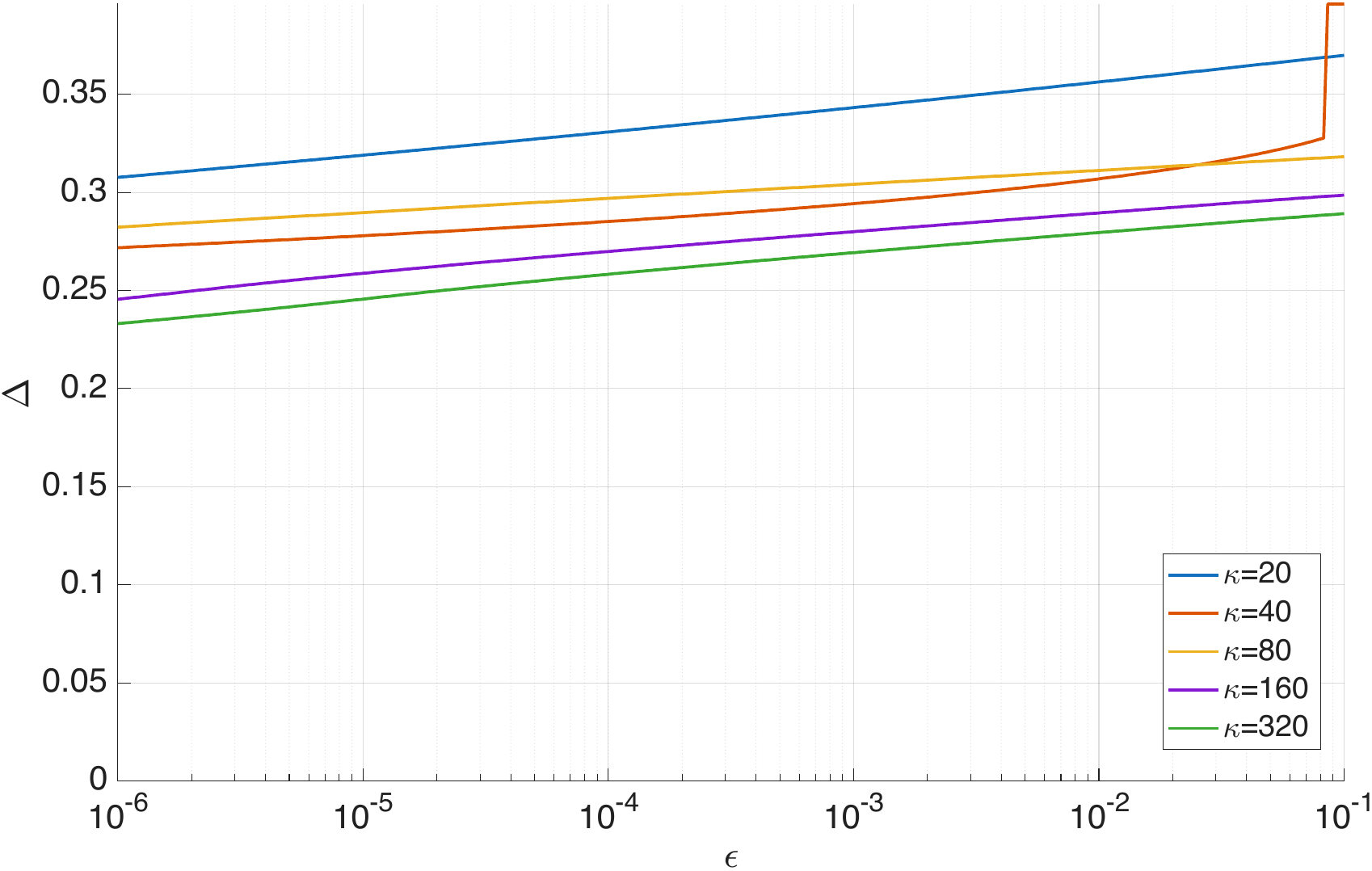}
    \caption{The plot shows the recommended values of $\Delta$ for the QW method, when we double the cost for controlling $A$, as a function of the tested condition number, for the set of non-Hermitian matrices  $A\in\mathbb{C}^{32\times 32}$.}
    \label{fig:delta_NH32_QWdouble}
\end{figure}

\begin{figure}[H]
    \centering
\includegraphics[width=0.45\textwidth]{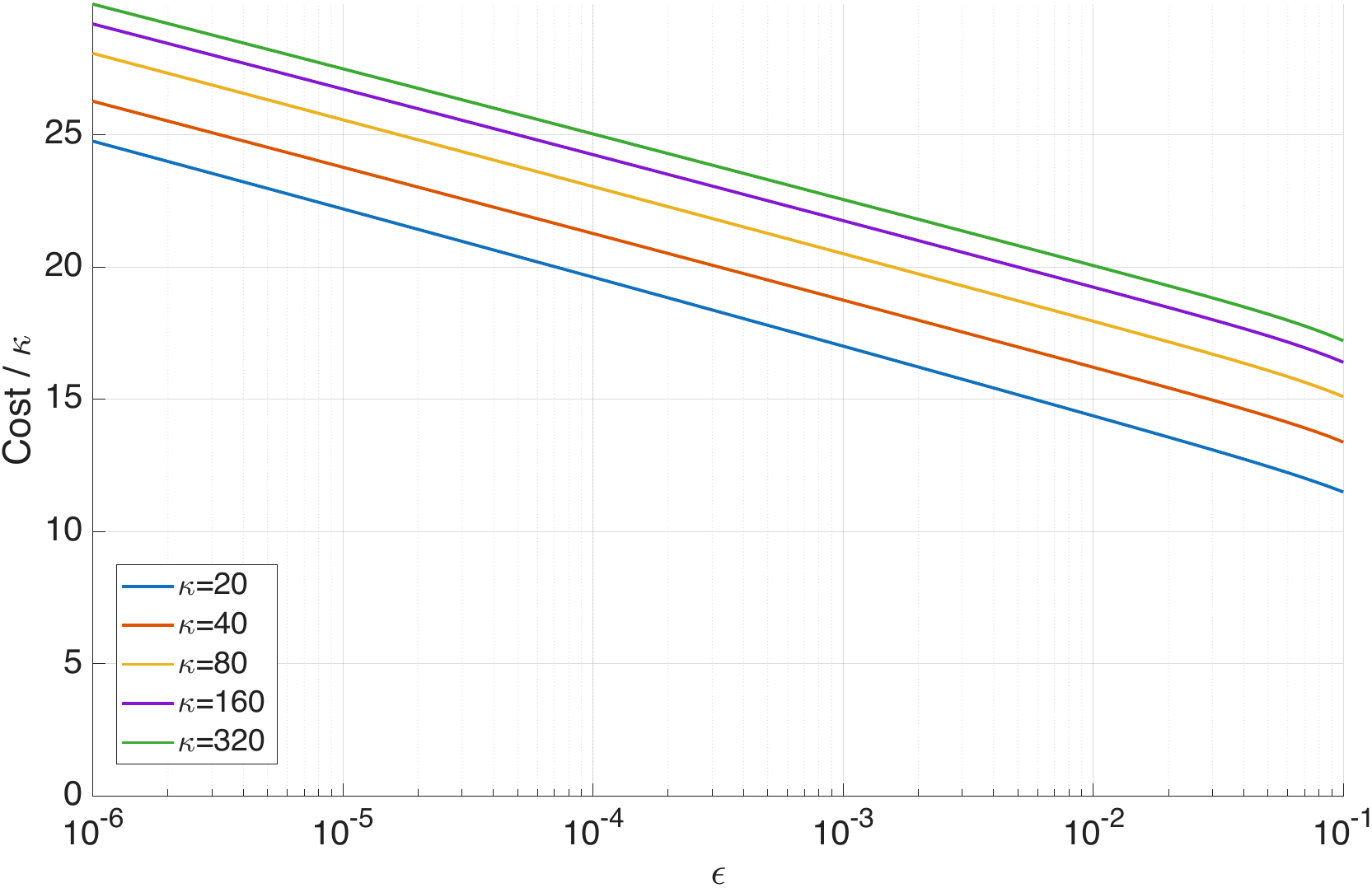}
    \caption{Total cost for the QW method for different target solution-error tolerances, when we double the cost for controlling $A$ (in contrast for what is given in \cref{fig:Cost_QW_NH32}), using the optimal step sizes $\Delta$ (as estimated from \cref{fig:delta_NH32_QWdouble}) for each tested condition number, over the set of non-Hermitian matrices $A\in\mathbb{C}^{32\times 32}$.}
    \label{fig:Cost_QW_NH32double}
\end{figure}

\begin{figure}[H]
    \centering
\includegraphics[width=0.45\textwidth]{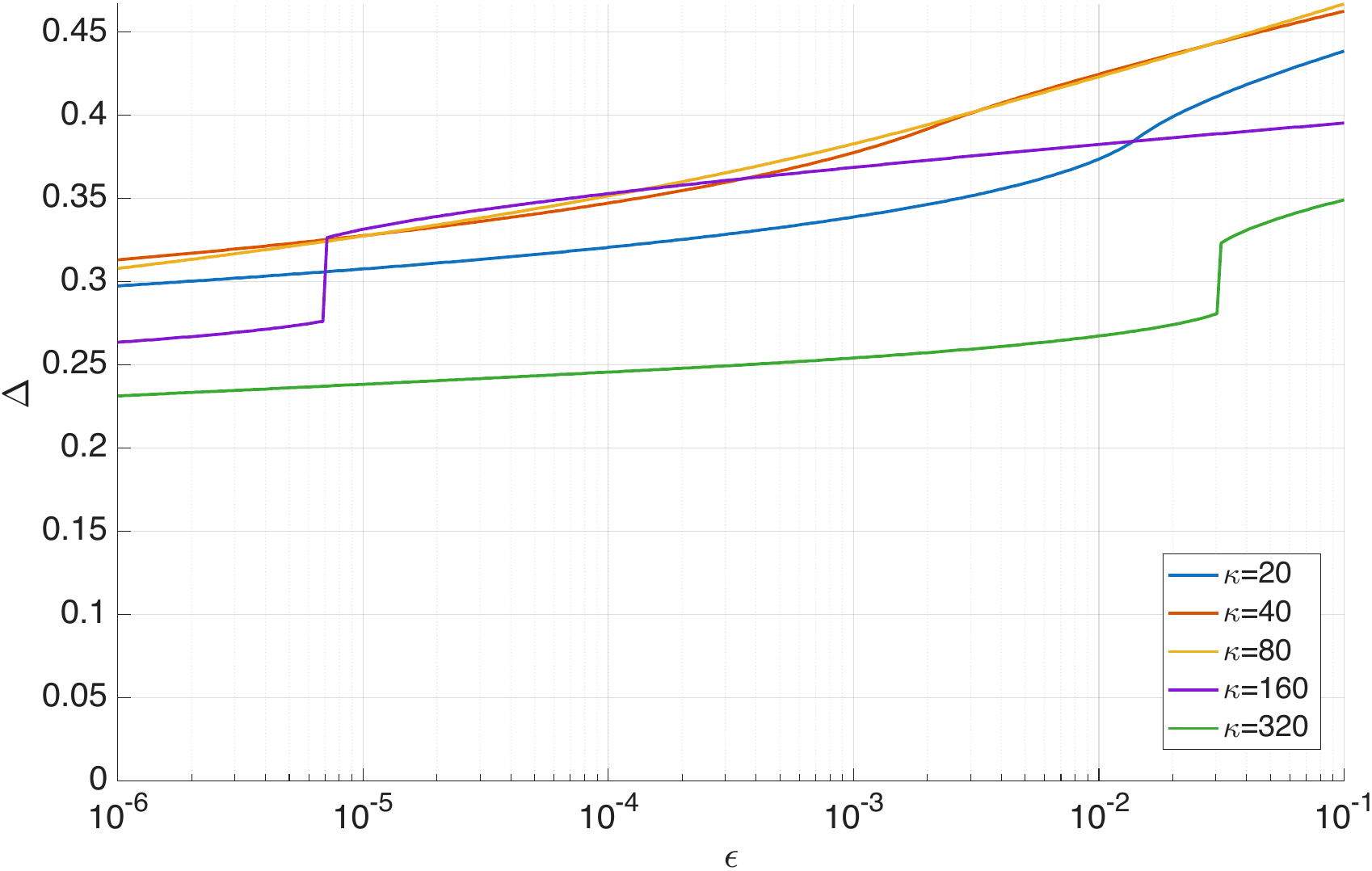}
    \caption{The plot shows the recommended values of $\Delta$ for the QW method, when we double the cost for controlling $A$, as a function of the tested condition number, for the set of non-Hermitian matrices  $A\in\mathbb{C}^{64\times 64}$.}
    \label{fig:delta_NH64_QWdouble}
\end{figure}

\begin{figure}[H]
    \centering
\includegraphics[width=0.45\textwidth]{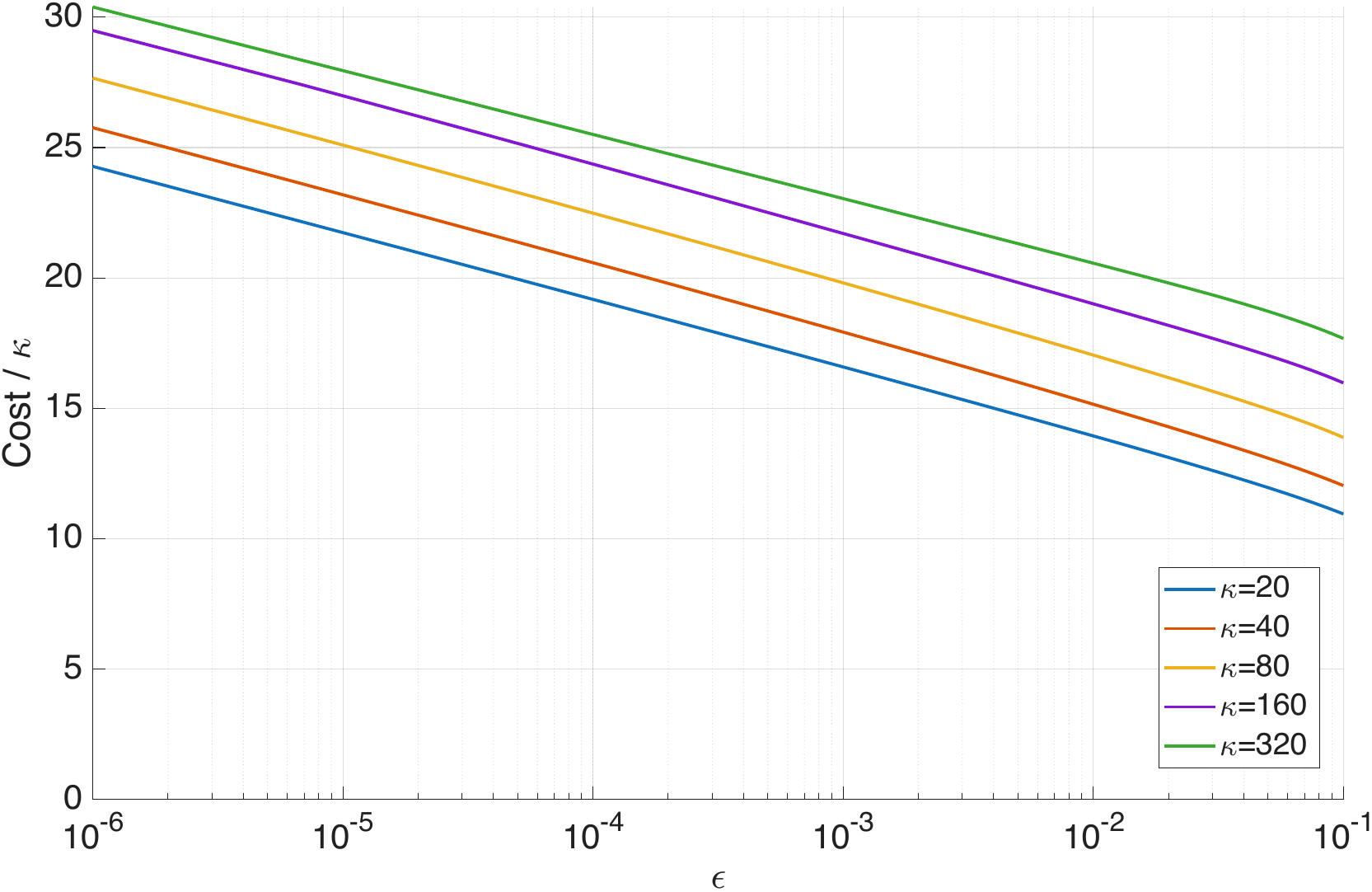}
    \caption{Total cost for the QW method for different target solution-error tolerances, when we double the cost for controlling $A$ (in contrast for what is given in \cref{fig:Cost_QW_NH64}), using the optimal step sizes $\Delta$ (as estimated from \cref{fig:delta_NH64_QWdouble} for each tested condition number, over the set of non-Hermitian matrices $A\in\mathbb{C}^{64\times 64}$.}
    \label{fig:Cost_QW_NH64double}
\end{figure}

\section{Numerical Tests - Supporting Material}
\label{app:support}

In this section, we extend further results considered in \cite{costa2023discrete} for the QW and the Randomised method. We tested both approaches, considering different dimensions across different condition numbers. Moreover, we give the plots for the recommended values for $\Delta$ accordingly to a given target precision for $\epsilon$ for the solution error for both methods, i.e., QW and Shortcut for the non-Hermitian matrices.

\subsection{Recommended values for $\Delta$}\label{app:recom_delta}

We begin by presenting the results for the QW method applied to non-Hermitian matrices, including the tables used to interpolate the cost values as a function of $\Delta$. We also provide the corresponding tables for the positive-definite (PD) cases. These results then allow us to determine the recommended values of $\Delta$ that minimize the total cost, namely the combined adiabatic and filtering contributions.

\begin{figure}[H]
    \centering
\includegraphics[width=0.45\textwidth]{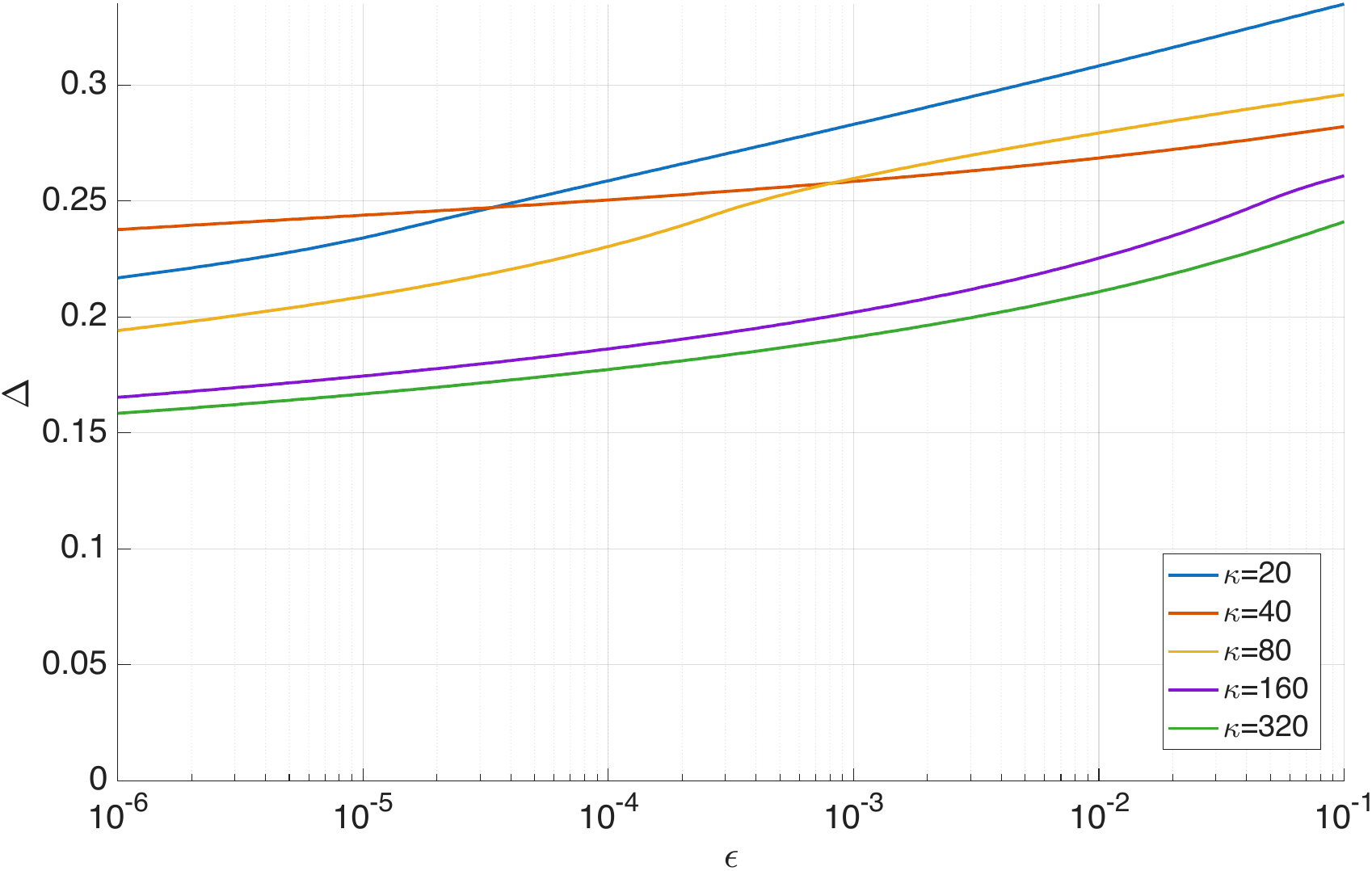}
    \caption{The plot shows the recommended values of $\Delta$ (used to estimated the total cost in \cref{fig:Cost_QW_NH32} ) resulted from \cref{tab:costNH_64}, for the QW method as a function of the total allowable error in the solution $\epsilon$ for different tested condition number, for the set of non-Hermitian matrices  $A\in\mathbb{C}^{32\times 32}$.}
    \label{fig:delta_NH32_QW}
\end{figure}

\begin{figure}[H]
    \centering
\includegraphics[width=0.45\textwidth]{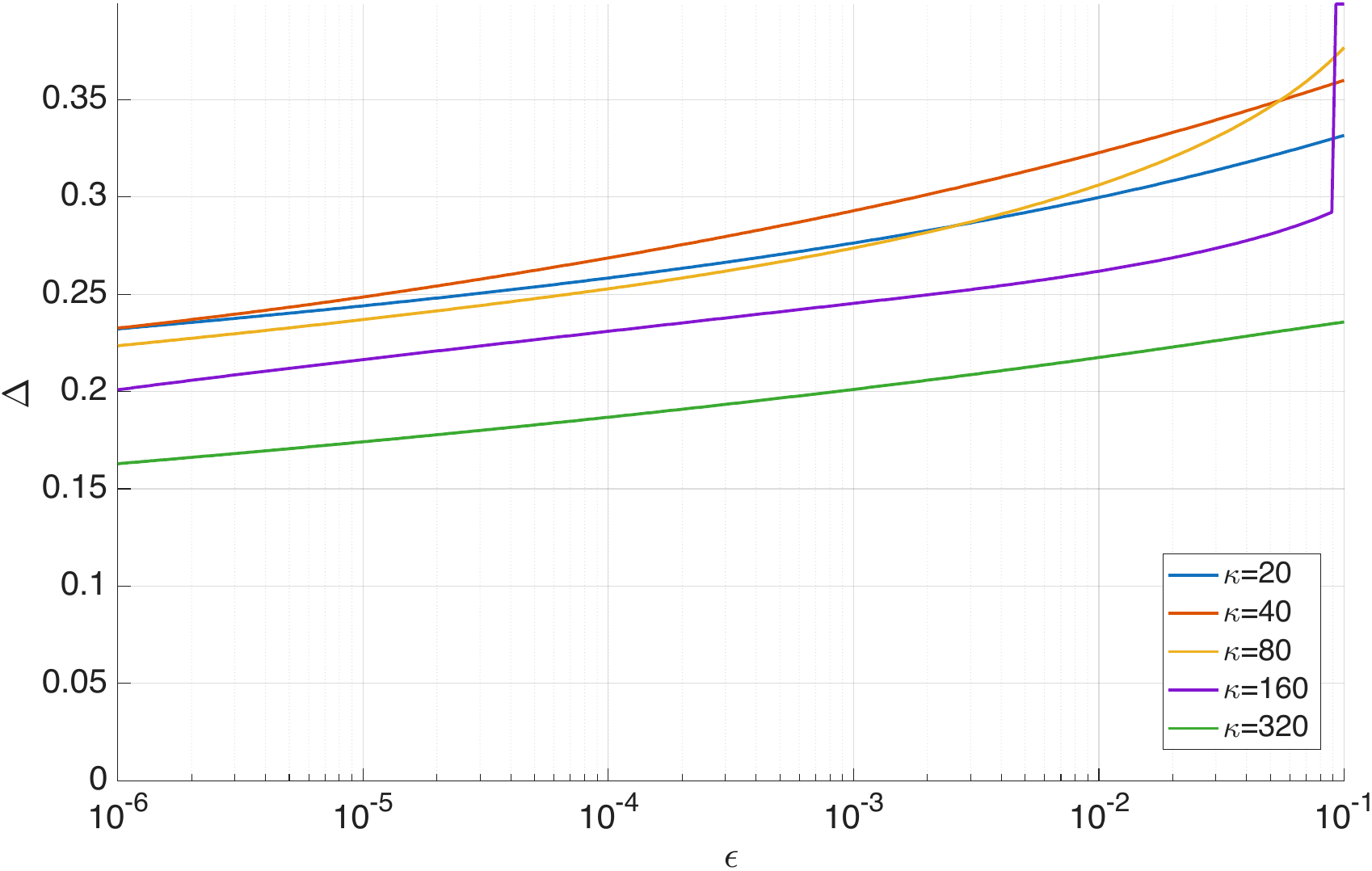}
    \caption{The plot shows the recommended values of $\Delta$ (used to estimated the total cost in \cref{fig:Cost_QW_NH64}) resulted from \cref{fig:delta_NH64_QW}, for the QW method as a function of the total allowable error in the solution $\epsilon$ for different tested condition number, for the set of non-Hermitian matrices  $A\in\mathbb{C}^{64\times 64}$.}
    \label{fig:delta_NH64_QW}
\end{figure}

We now give the results for the recommended values for $\Delta$ for the Shortcut method, followed by the tables used for interpolation of $\Delta$. 

\begin{figure}[H]
    \centering
\includegraphics[width=0.45\textwidth]{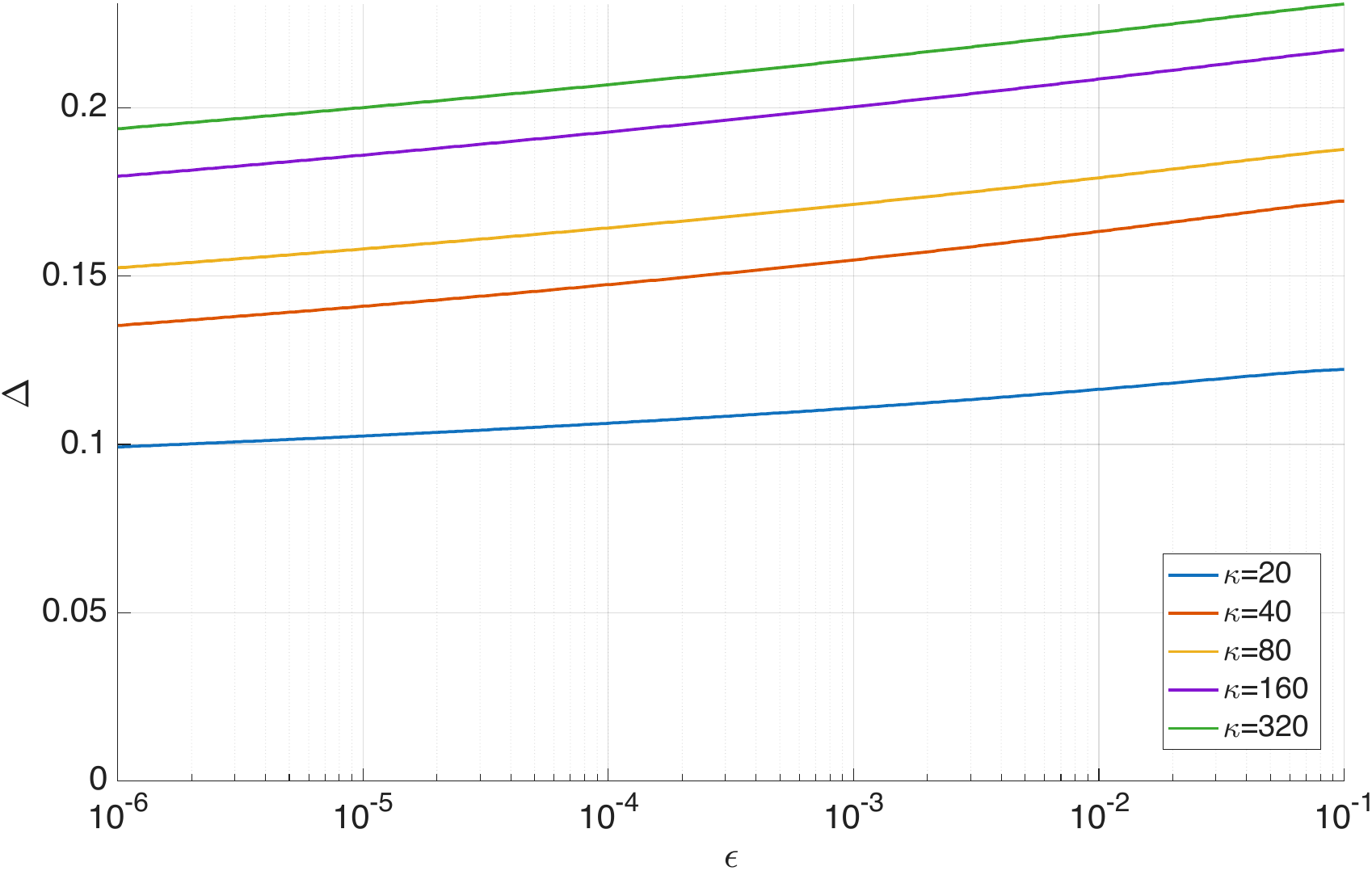}
    \caption{The plot shows the recommended values of $\Delta$ (used to estimated the total cost in \cref{fig:Cost_Sht32}) resulted from \cref{tab:cost_eta_32_unknown_norm}, for the Shortcut method as a function of the total allowable error in the solution $\epsilon$ for different tested condition number, for the set of non-Hermitian matrices  $A\in\mathbb{C}^{32\times 32}$.}
    \label{fig:recomenSH_delta32}
\end{figure}

\begin{figure}[H]
    \centering
\includegraphics[width=0.45\textwidth]{delta_shortcut_nonhermitian_32v2.pdf}
    \caption{The plot shows the recommended values of $\Delta$ (used to estimated the total cost in \cref{fig:Cost_Sh_NH64}) resulted from \cref{tab:cost_eta_64_unknown_norm}, for the Shortcut method as a function of the total allowable error in the solution $\epsilon$ for different tested condition number, for the set of non-Hermitian matrices  $A\in\mathbb{C}^{64\times 64}$.}
    \label{fig:delta_NH64_Sh}
\end{figure}

\begin{table*}[t!]
\centering
\small
\setlength{\tabcolsep}{4pt}
\begin{tabular}{|r|c|c|c|c|c|c|c|c|c|c|}
\hline
$\kappa$ &
$\underset{\Delta=0.40}{\text{Cost}}$ & $\underset{\Delta=0.40}{\mathrm{Error}}$ &
$\underset{\Delta=0.30}{\text{Cost}}$ & $\underset{\Delta=0.30}{\mathrm{Error}}$ &
$\underset{\Delta=0.25}{\text{Cost}}$ & $\underset{\Delta=0.25}{\mathrm{Error}}$ &
$\underset{\Delta=0.20}{\text{Cost}}$ & $\underset{\Delta=0.20}{\mathrm{Error}}$ &
$\underset{\Delta=0.15}{\text{Cost}}$ & $\underset{\Delta=0.15}{\mathrm{Error}}$ \\
\hline
\hline
20  & 68   & 0.385 & 84   & 0.299 & 100  & 0.238 & 112  & 0.198 & 132  & 0.149 \\ \hline
40  & 160  & 0.396 & 196  & 0.295 & 220  & 0.244 & 252  & 0.200 & 292  & 0.148 \\ \hline
80  & 400  & 0.396 & 452  & 0.294 & 504  & 0.246 & 560  & 0.197 & 628  & 0.148 \\ \hline
160 & 888  & 0.399 & 980  & 0.299 & 1068 & 0.250 & 1172 & 0.199 & 1288 & 0.150 \\ \hline
320 & 1888 & 0.399 & 2072 & 0.299 & 2256 & 0.246 & 2448 & 0.198 & 2676 & 0.150 \\
\hline
\end{tabular}
\caption{Cost and average $\Delta$ error (QW) for different condition numbers $\kappa$ for non-Hermitian matrices $A\in\mathbb{R}^{32\times32}$  used to generate the recommended values for $\Delta$ in \cref{fig:delta_NH32_QW}.}
\label{tab:costNH_32}
\end{table*}

\begin{table*}[t!]
\centering
\small
\setlength{\tabcolsep}{4pt}
\begin{tabular}{|r|c|c|c|c|c|c|c|c|c|c|}
\hline
$\kappa$ &
$\underset{\Delta=0.40}{\text{Cost}}$ & $\underset{\Delta=0.40}{\mathrm{Error}}$ &
$\underset{\Delta=0.30}{\text{Cost}}$ & $\underset{\Delta=0.30}{\mathrm{Error}}$ &
$\underset{\Delta=0.25}{\text{Cost}}$ & $\underset{\Delta=0.25}{\mathrm{Error}}$ &
$\underset{\Delta=0.20}{\text{Cost}}$ & $\underset{\Delta=0.20}{\mathrm{Error}}$ &
$\underset{\Delta=0.15}{\text{Cost}}$ & $\underset{\Delta=0.15}{\mathrm{Error}}$ \\
\hline
\hline
20   & 64   & 0.387 & 80   & 0.299 & 92   & 0.250 & 112   & 0.189 & 132   & 0.146 \\ \hline
40   & 144  & 0.393 & 188  & 0.293 & 212  & 0.250 & 244   & 0.201 & 288   & 0.149 \\ \hline
80   & 340  & 0.396 & 436  & 0.296 & 488  & 0.249 & 552   & 0.200 & 632   & 0.150 \\ \hline
160  & 788  & 0.399 & 988  & 0.300 & 1092 & 0.249 & 1212  & 0.200 & 1348  & 0.149 \\ \hline
320  & 1780 & 0.401 & 2136 & 0.300 & 2320 & 0.247 & 2516  & 0.199 & 2760  & 0.149 \\ 
\hline
\end{tabular}
\caption{Cost and average $\Delta$ error (QW) for different condition numbers $\kappa$ for non-Hermitian matrices $A\in \mathbb{R}^{64\times64}$  used to generate the recommended values for $\Delta$ in \cref{fig:delta_NH64_QW}.}
\label{tab:costNH_64}
\end{table*}

\begin{table*}[t!]
\centering
\small
\setlength{\tabcolsep}{4pt}
\begin{tabular}{|r|c|c|c|c|c|c|c|c|c|c|}
\hline
$\kappa$ &
$\underset{\Delta=0.40}{\text{Cost}}$ & $\underset{\Delta=0.40}{\mathrm{Error}}$ &
$\underset{\Delta=0.30}{\text{Cost}}$ & $\underset{\Delta=0.30}{\mathrm{Error}}$ &
$\underset{\Delta=0.20}{\text{Cost}}$ & $\underset{\Delta=0.20}{\mathrm{Error}}$ &
$\underset{\Delta=0.15}{\text{Cost}}$ & $\underset{\Delta=0.15}{\mathrm{Error}}$ &
$\underset{\Delta=0.10}{\text{Cost}}$ & $\underset{\Delta=0.10}{\mathrm{Error}}$ \\
\hline
\hline
20   & 12   & 0.3237 & 16   & 0.2422 & 20   & 0.1868 & 28   & 0.1470 & 44   & 0.0911 \\ \hline
40   & 20   & 0.3725 & 24   & 0.2780 & 48   & 0.1939 & 60   & 0.1492 & 88   & 0.0977 \\ \hline
80   & 36   & 0.3978 & 60   & 0.2924 & 92   & 0.1945 & 124  & 0.1457 & 172  & 0.1001 \\ \hline
160  & 76   & 0.4003 & 112  & 0.3009 & 180  & 0.2007 & 248  & 0.1477 & 348  & 0.0987 \\ \hline
320  & 148  & 0.3945 & 232  & 0.2940 & 360  & 0.1998 & 472  & 0.1489 & 656  & 0.0993 \\ \hline
640  & 292  & 0.3984 & 432  & 0.2970 & 680  & 0.1934 & 868  & 0.1494 & 1224 & 0.1000 \\ \hline
1280 & 528  & 0.3912 & 780  & 0.2929 & 1200 & 0.1996 & 1660 & 0.1459 & 2380 & 0.0975 \\ \hline
2560 & 896  & 0.3973 & 1368 & 0.2997 & 2240 & 0.1986 & 3000 & 0.1496 & 4080 & 0.0990 \\
\hline
\end{tabular}
\caption{Cost and average $\Delta$ error (QW) for different condition numbers $\kappa$ for positive definite matrices $A\in \mathbb{R}^{32\times32}$ used to generate the recommended values for $\Delta$ in \cref{fig:Delta32}.}
\label{tab:cost_PD_32}
\end{table*}

\begin{table*}[t!]
\centering
\small
\setlength{\tabcolsep}{4pt}
\begin{tabular}{|r|c|c|c|c|c|c|c|c|c|c|}
\hline
$\kappa$ &
$\underset{\Delta=0.40}{\text{Cost}}$ & $\underset{\Delta=0.40}{\mathrm{Error}}$ &
$\underset{\Delta=0.30}{\text{Cost}}$ & $\underset{\Delta=0.30}{\mathrm{Error}}$ &
$\underset{\Delta=0.20}{\text{Cost}}$ & $\underset{\Delta=0.20}{\mathrm{Error}}$ &
$\underset{\Delta=0.15}{\text{Cost}}$ & $\underset{\Delta=0.15}{\mathrm{Error}}$ &
$\underset{\Delta=0.10}{\text{Cost}}$ & $\underset{\Delta=0.10}{\mathrm{Error}}$ \\
\hline
\hline
20   & 12   & 0.318 & 16   & 0.237 & 20   & 0.189 & 28   & 0.143 & 44   & 0.087 \\ \hline
40   & 20   & 0.367 & 28   & 0.270 & 44   & 0.201 & 60   & 0.144 & 84   & 0.100 \\ \hline
80   & 36   & 0.388 & 56   & 0.296 & 92   & 0.192 & 124  & 0.147 & 176  & 0.098 \\ \hline
160  & 76   & 0.391 & 116  & 0.294 & 188  & 0.195 & 248  & 0.149 & 352  & 0.099 \\ \hline
320  & 152  & 0.397 & 236  & 0.294 & 380  & 0.194 & 492  & 0.149 & 700  & 0.100 \\ \hline
640  & 308  & 0.395 & 452  & 0.300 & 720  & 0.201 & 984  & 0.149 & 1412  & 0.100 \\ \hline
1280 & 576  & 0.399 & 880  & 0.300 & 1520 & 0.193 & 1980 & 0.150 & 2860 & 0.100 \\ \hline
2560 & 1140 & 0.399 & 1776 & 0.300 & 2900 & 0.200 & 3880 & 0.150 & 5560 & 0.101 \\
\hline
\end{tabular}
\caption{Cost and average $\Delta$ error (QW) for different condition numbers $\kappa$ for positive definite matrices $A\in \mathbb{R}^{64\times64}$  used to generate the recommended values for $\Delta$ in \cref{fig:Delta64}.}
\label{tab:cost_PD_64}
\end{table*}

\begin{table*}[t!]
\centering
\small
\setlength{\tabcolsep}{4pt}
\begin{tabular}{|r|ccc|ccc|ccc|ccc|}
\hline
$\kappa$ &
\multicolumn{3}{c|}{$\Delta = 0.30$} &
\multicolumn{3}{c|}{$\Delta = 0.20$} &
\multicolumn{3}{c|}{$\Delta = 0.10$} &
\multicolumn{3}{c|}{$\Delta = 0.05$} \\
\cline{2-13}
 & Cost & Error & $\eta_{\rm KR}$ &
   Cost & Error & $\eta_{\rm KR}$ &
   Cost & Error & $\eta_{\rm KR}$ &
   Cost & Error & $\eta_{\rm KR}$ \\
\hline\hline
20  &
$1.51\times 10^{2}$ & 0.2891 & 0.1488 &
$1.66\times 10^{2}$ & 0.1942 & 0.0981 &
$1.88\times 10^{2}$ & 0.0986 & 0.0450 &
$2.14\times 10^{2}$ & 0.0495 & 0.0235 \\ \hline

40  &
$3.72\times 10^{2}$ & 0.2991 & 0.1136 &
$4.16\times 10^{2}$ & 0.1858 & 0.0632 &
$4.80\times 10^{2}$ & 0.0924 & 0.0291 &
$5.48\times 10^{2}$ & 0.0446 & 0.0137 \\ \hline

80  &
$9.22\times 10^{2}$ & 0.2889 & 0.0713 &
$1.02\times 10^{3}$ & 0.1910 & 0.0420 &
$1.18\times 10^{3}$ & 0.0940 & 0.0188 &
$1.33\times 10^{3}$ & 0.0497 & 0.0097 \\ \hline

160 &
$2.20\times 10^{3}$ & 0.3015 & 0.0493 &
$2.48\times 10^{3}$ & 0.1969 & 0.0262 &
$2.90\times 10^{3}$ & 0.0958 & 0.0109 &
$3.28\times 10^{3}$ & 0.0460 & 0.0052 \\ \hline

320 &
$5.34\times 10^{3}$ & 0.2989 & 0.0296 &
$6.02\times 10^{3}$ & 0.1972 & 0.0152 &
$7.04\times 10^{3}$ & 0.0951 & 0.0061 &
$7.82\times 10^{3}$ & 0.0498 & 0.0031 \\
\hline
\end{tabular}
\caption{Average cost, average error, and $\eta_{\rm KR}$ for different condition numbers $\kappa$ and threshold values $\Delta$ for the Shortcut method for non-Hermitian matrices $A \in \mathbb{R}^{32\times 32}$ in the unknown-norm regime.}
\label{tab:cost_eta_32_unknown_norm}
\end{table*}

\begin{table*}[t!]
\centering
\small
\setlength{\tabcolsep}{4pt}
\begin{tabular}{|r|ccc|ccc|ccc|ccc|}
\hline
$\kappa$ &
\multicolumn{3}{c|}{$\Delta = 0.30$} &
\multicolumn{3}{c|}{$\Delta = 0.20$} &
\multicolumn{3}{c|}{$\Delta = 0.10$} &
\multicolumn{3}{c|}{$\Delta = 0.05$} \\
\cline{2-13}
 & Cost & Error & $\eta_{\rm KR}$ &
   Cost & Error & $\eta_{\rm KR}$ &
   Cost & Error & $\eta_{\rm KR}$ &
   Cost & Error & $\eta_{\rm KR}$ \\
\hline\hline
20  &
$1.46\times 10^{2}$ & 0.2979 & 0.1748 &
$1.62\times 10^{2}$ & 0.1974 & 0.1000 &
$1.89\times 10^{2}$ & 0.0903 & 0.0492 &
$2.16\times 10^{2}$ & 0.0457 & 0.0240 \\ \hline

40  &
$3.66\times 10^{2}$ & 0.2944 & 0.1220 &
$4.12\times 10^{2}$ & 0.1865 & 0.0712 &
$4.68\times 10^{2}$ & 0.0974 & 0.0362 &
$5.36\times 10^{2}$ & 0.0459 & 0.0165 \\ \hline

80  &
$9.00\times 10^{2}$ & 0.2887 & 0.0892 &
$9.92\times 10^{2}$ & 0.1954 & 0.0520 &
$1.15\times 10^{3}$ & 0.0911 & 0.0188 &
$1.31\times 10^{3}$ & 0.0466 & 0.0115 \\ \hline

160 &
$2.02\times 10^{3}$ & 0.2927 & 0.0619 &
$2.40\times 10^{3}$ & 0.1918 & 0.0334 &
$2.80\times 10^{3}$ & 0.0934 & 0.0148 &
$3.12\times 10^{3}$ & 0.0493 & 0.0076 \\ \hline

320 &
$5.14\times 10^{3}$ & 0.2960 & 0.0390 &
$5.80\times 10^{3}$ & 0.1957 & 0.0208 &
$6.74\times 10^{3}$ & 0.0927 & 0.0088 &
$7.52\times 10^{3}$ & 0.0484 & 0.0045 \\
\hline
\end{tabular}
\caption{Average cost, average error, and $\eta_{\rm KR}$ for different condition numbers $\kappa$ and threshold values $\Delta$ for the Shortcut method for non-Hermitian matrices $A \in \mathbb{R}^{64\times 64}$ in the unknown-norm regime.}
\label{tab:cost_eta_64_unknown_norm}
\end{table*}

\subsection{Extended analysis of the QW and Randomised method}\label{sec:further}

We report the results over 100 instances by fixing the walk step, in the QW method, so we get on average the following set of $\delta$ values: $\Delta \in \{0.4,0.3,0.25,0.20,0.15\}$, which is reported in \cref{tab:performance_04,tab:performance_metrics1,tab:performance_metrics2,tab:costNH_32,tab:costNH_64}, which represents the costly part of the quantum linear solvers, accordingly to \cite{costa2023discrete}, before the filtering step.

In the table, we report the total cost, which represents the number of applications of the block encoding of $U_A$, the average error (including the standard deviation), and the average constant factor (also including the standard deviation).

\begin{table*}[t!]
\centering
\renewcommand{\arraystretch}{1.2}
\begin{tabular}{|c||c|c|c||c|c|c|}
\hline
\multicolumn{7}{|c|}{\textbf{Results for } $A \in \mathbb{R}^{8\times 8}$} \\
\hline
\multicolumn{1}{|c||}{} & \multicolumn{3}{c||}{Quantum Walk} & \multicolumn{3}{c|}{Randomised method} \\
\hline
$\kappa$ & $\text{Cost}$ & $\text{(Error)}_{avg}$ & $\alpha_{avg}$ & $\text{(Cost)}_{avg}$ & $\text{(Error)}_{avg}$ & $\alpha_{avg}$ \\
\hline
20  & $9.20\times 10$    & $0.381 \pm 0.114$ & $1.75 \pm 0.53$ & $6.42 \times 10^2$ & $0.392 \pm 0.022$ & --- \\
40  & $2.04\times 10^2$  & $0.393 \pm 0.088$ & $2.00 \pm 0.45$ & $1.41 \times 10^3$ & $0.393 \pm 0.013$ & --- \\
80  & $4.40\times 10^2$  & $0.394 \pm 0.059$ & $2.16 \pm 0.48$ & $3.06 \times 10^3$ & $0.394 \pm 0.017$ & --- \\
160 & $9.20\times 10^2$  & $0.395 \pm 0.038$ & $2.26 \pm 0.34$ & $6.47 \times 10^3$ & $0.393 \pm 0.018$ & --- \\
\hline
\end{tabular}
\caption{Performance comparison between the discrete adiabatic (Quantum Walk) and the Randomised method across 100 non-Hermitian matrix instances of size $8 \times 8$ for the target error $\Delta=0.4$.}
\label{tab:performance_04}
\end{table*}

\begin{table*}[t!]
\centering
\renewcommand{\arraystretch}{1.2}
\begin{tabular}{|c||c|c|c||c|c|c|}
\hline
\multicolumn{7}{|c|}{\textbf{Results for } $A \in \mathbb{R}^{16\times 16}$} \\
\hline
\multicolumn{1}{|c||}{} & \multicolumn{3}{c||}{Quantum Walk} & \multicolumn{3}{c|}{Randomised method} \\
\hline
$\kappa$ & $\text{Cost}$ & $\text{(Error)}_{avg}$ & $\alpha_{avg}$ & $\text{(Cost)}_{avg}$ & $\text{(Error)}_{avg}$ & $\alpha_{avg}$ \\
\hline
20  & $7.60\times 10$   & $0.398 \pm 0.154$ & $1.51 \pm 0.59$  & $6.19\times 10^2$ & $0.391 \pm 0.029$ & --- \\
40  & $1.84\times 10^2$ & $0.399 \pm 0.142$ & $1.83 \pm 0.65$  & $1.36\times 10^3$ & $0.393 \pm 0.018$ & --- \\
80  & $4.24\times 10^2$ & $0.398 \pm 0.115$ & $2.10 \pm 0.61$  & $3.02\times 10^3$ & $0.394 \pm 0.018$ & --- \\
160 & $9.00\times 10^2$ & $0.398 \pm 0.082$ & $2.24 \pm 0.46$  & $6.56\times 10^3$ & $0.394 \pm 0.020$ & --- \\
320 & $1.88\times 10^3$ & $0.400 \pm 0.042$ & $2.35 \pm 0.245$ & $1.39\times 10^4$ & $0.395 \pm 0.026$ & --- \\
\hline
\end{tabular}
\caption{Performance comparison between the discrete adiabatic (Quantum Walk) and the Randomised method across 100 non-Hermitian matrix instances of size $16 \times 16$ for the target error $\Delta=0.4$.}
\label{tab:performance_metrics1}
\end{table*}

\begin{table*}[t!]
\centering
\renewcommand{\arraystretch}{1.2}
\begin{tabular}{|c||c|c|c||c|c|c|}
\hline
\multicolumn{7}{|c|}{\textbf{Results for } $A \in \mathbb{R}^{32\times 32}$} \\
\hline
\multicolumn{1}{|c||}{} & \multicolumn{3}{c||}{Quantum Walk} & \multicolumn{3}{c|}{Randomised method} \\
\hline
$\kappa$ & $\text{Cost}$ & $\text{(Error)}_{avg}$ & $\alpha_{avg}$ & $\text{(Cost)}_{avg}$ & $\text{(Error)}_{avg}$ & $\alpha_{avg}$ \\
\hline
20  & $6.80\times 10$   & $0.385 \pm 0.133$ & $1.31 \pm 0.45$  & $5.94 \times 10^2$ & $0.392 \pm 0.014$ & --- \\
40  & $1.60\times 10^2$ & $0.396 \pm 0.156$ & $1.58 \pm 0.62$  & $1.30 \times 10^3$ & $0.394 \pm 0.012$ & --- \\
80  & $4.00\times 10^2$ & $0.393 \pm 0.136$ & $1.97 \pm 0.68$  & $2.89 \times 10^3$ & $0.394 \pm 0.014$ & --- \\
160 & $8.88\times 10^2$ & $0.399 \pm 0.105$ & $2.19 \pm 0.58$  & $6.41 \times 10^3$ & $0.395 \pm 0.020$ & --- \\
320 & $1.88\times 10^3$ & $0.399 \pm 0.070$ & $2.34 \pm 0.41$  & $1.39 \times 10^4$ & $0.395 \pm 0.015$ & --- \\
640 & $3.88\times 10^3$ & $0.400 \pm 0.049$ & $2.42 \pm 0.30$  & $2.96 \times 10^4$ & $0.395 \pm 0.011$ & --- \\
\hline
\end{tabular}
\caption{Performance comparison between the discrete adiabatic (Quantum Walk) and the Randomised method across 100 non-Hermitian matrix instances of size $32 \times 32$ for the target error $\Delta=0.4$.}
\label{tab:performance_metrics2}
\end{table*}

\end{document}